\documentclass[journal,twocolumn,twoside]{IEEEtran}



\usepackage[dvipsnames]{xcolor}
\usepackage[cmex10]{amsmath}
\usepackage{graphicx,epic,eepic,epsfig,latexsym,amssymb,verbatim,color,revsymb}


\usepackage[english]{babel}
\usepackage[T1]{fontenc}
\usepackage{mathrsfs}
\usepackage{enumerate}
\usepackage{graphicx}
\usepackage{mathtools}
\usepackage[export]{adjustbox}
\usepackage{microtype}
\usepackage{MnSymbol}
\usepackage{lipsum}




\usepackage[colorlinks = true,
            linkcolor = red,
            urlcolor  = blue,
            citecolor = blue,
            anchorcolor = red]{hyperref}

\usepackage{orcidlink}

\newtheorem{theorem}{Theorem}
\newtheorem{corollary}[theorem]{Corollary}
\newtheorem{proposition}[theorem]{Proposition}
\newtheorem{lemma}[theorem]{Lemma}
 
\newtheorem{definition}[theorem]{Definition}  
\newtheorem{remark}[theorem]{Remark}  

\newcommand\qedsymbol{$\blacksquare$}

\newlength{\blank}
\settowidth{\blank}{\emph{~}}

\newenvironment{proof-of}[1][{\hspace{-\blank}}]{{\medskip\noindent\textit{Proof~{#1}.\ }}}{\hfill\qedsymbol}
\newenvironment{proof}{{\medskip\noindent\textit{Proof.\ }}}{\hfill\qedsymbol}

\newcommand{\Tr}{{\operatorname{Tr}\,}}

\newcommand{\id}{{\operatorname{id}}}

\newcommand{\1}{\openone}
\newcommand{\ket}[1]{|#1\rangle}
\newcommand{\bra}[1]{\langle #1|}

\newcommand{\ketbra}[2]{|#1\rangle\!\langle #2|}
\newcommand{\proj}[1]{|#1\rangle\!\langle #1|}

\newcommand{\cD}{{\mathcal{D}}}
\newcommand{\cE}{{\mathcal{E}}}

\newcommand{\cS}{{\mathcal{S}}}

\newcommand{\nc}{\newcommand}
\nc{\rnc}{\renewcommand}
\nc{\beq}{\begin{equation}}
\nc{\eeq}{{\end{equation}}}
\nc{\beqa}{\begin{eqnarray}}
\nc{\eeqa}{\end{eqnarray}}
\nc{\lbar}[1]{\overline{#1}}
\nc{\avg}[1]{\langle#1\rangle}
\nc{\Rank}{\operatorname{Rank}}
\nc{\smfrac}[2]{\mbox{$\frac{#1}{#2}$}}
\nc{\tr}{\operatorname{Tr}}
\nc{\ox}{\otimes}

\nc{\RR}{{{\mathbb R}}}
\nc{\CC}{{{\mathbb C}}}
\nc{\FF}{{{\mathbb F}}}
\nc{\NN}{{{\mathbb N}}}
\nc{\ZZ}{{{\mathbb Z}}}
\nc{\PP}{{{\mathbb P}}}
\nc{\QQ}{{{\mathbb Q}}}
\nc{\UU}{{{\mathbb U}}}
\nc{\EE}{{{\mathbb E}}}

\newcommand{\wet}[1]{\widetilde{#1}}



\begin{document}

\title{A Rate-Distortion Perspective on \protect\\ Quantum State Redistribution} 

\author{Zahra~Baghali~Khanian\,\orcidlink{0000-0002-0892-7519} and Andreas~Winter\,\orcidlink{0000-0001-6344-4870}
\thanks{Date: 6 December 2024. 
Part of this work, focusing on the source coding aspects, 
has been presented at ISIT 2020 \cite{ISIT-paper} and has 
featured partially in the first author's PhD thesis \cite{ZBK_PhD_thesis}.
ZBK was supported by the DFG Cluster of Excellence 2111 (Munich Center for Quantum
Science and Technology, MCQST), and a Marie Sk{\l}odowska Curie Postdoctoral Fellowship 
of the European Commission. 
AW was supported by the Spanish MINECO and MICIN (projects PID2019-107609GB-I00
and PID2022-141283NB-I00) with the support of FEDER funds, 
by the Generalitat de Catalunya (project 2017-SGR-1127),
by the European Commission QuantERA grant ExTRaQT 
(Spanish MICIN project PCI2022-132965), 
by the Spanish MICIN with funding from European Union NextGenerationEU 
(PRTR-C17.I1) and the Generalitat de Catalunya, by the Spanish MTDFP 
through the QUANTUM ENIA project: Quantum Spain, funded by the European 
Union NextGenerationEU within the framework of the ``Digital Spain 
2026 Agenda'', by the Alexander von Humboldt Foundation, and by the 
Institute for Advanced Study of the Technische Universit\"at M\"unchen.
The authors thank M. Jupien for helpful demonstrations regarding
the advanced use of {\LaTeX}.}
\thanks{Z. B. Khanian is with the Munich Center for Quantum Science and Technology and Zentrum Mathematik, Technische Universit\"{a}t M\"{u}nchen, 85748 Garching, Germany. Email: zbkhanian@gmail.com.}
\thanks{A. Winter is with ICREA---Instituci\'o Catalana de Recerca i Estudis Avan\c{c}ats, Pg.~Lluis Companys, 23, 08010 Barcelona, Spain, and F\'{\i}sica Te\`{o}rica: Grup d'Informaci\'{o} Qu\`{a}ntica (GIQ), 
Departament de F\'{\i}sica, Universitat Aut\`{o}noma de Barcelona, 08193 Bellaterra (Barcelona), Spain. 
He is furthermore a Hans Fischer Senior Fellow with the Institute for Advanced Study, Technische Universit\"at M\"unchen,  Lichtenbergstra{\ss}e 2a, D-85748 Garching, Germany. Email: andreas.winter@uab.cat.}
}

\maketitle

\begin{abstract}
We consider a rate-distortion version of the quantum state redistribution task, 
where the error of the decoded state is judged via an additive distortion measure; 
it thus constitutes a quantum generalisation of the classical Wyner-Ziv problem.
The quantum source is described by a tripartite pure state shared between 
Alice ($A$, encoder), Bob ($B$, decoder) and a reference ($R$). 
Both Alice and Bob are required to output a system ($\widetilde{A}$ and $\widetilde{B}$, 
respectively), and the distortion measure is encoded in an observable on 
$\widetilde{A}\widetilde{B}R$. 
%
It includes as special cases most quantum rate-distortion problems considered in 
the past, and in particular quantum data compression with the fidelity 
measured per copy; furthermore, it generalises the well-known state merging and 
quantum state redistribution tasks for a pure state source, with per-copy fidelity, 
and a variant recently considered by us, where the source is an ensemble of pure 
states 
[ZBK {\&} AW, Proc. ISIT 2020, pp.~1858-1863 
and ZBK, PhD thesis, UAB 2020, arXiv:2012.14143].
We derive a single-letter formula for the rate-distortion function of 
compression schemes assisted by free entanglement. A peculiarity of the 
formula is that in general it requires optimisation over an unbounded 
auxiliary register, so the rate-distortion function is not readily 
computable from our result, and there is a continuity issue at zero distortion. 
However, we show how to overcome these difficulties in certain situations. 
%
\end{abstract}

\begin{IEEEkeywords}
  Quantum source coding, rate distortion theory, quantum state redistribution
\end{IEEEkeywords}

\section{Introduction and setting}
\label{sec:introduction}
\IEEEPARstart{S}{ource} coding is for information theory as much a practical matter, as it is 
a fundamental paradigm to establish the amount of information in given data. 
Shannon's original model of block coding \cite{Shannon:IT}, giving operational meaning to 
the entropy, was subsequently generalized to situations with side information 
at the decoder \cite{SlepianWolf}, which gives an operational interpretation for 
the conditional entropy. In another direction, by considering more flexible 
error criteria, ``distortions'', instead of the rigid block error 
probability \cite{Shannon:RD,Berger:RD-book}, leads to a rate-distortion tradeoff 
characterized by the mutual information. Many other variations of source compression 
have been conceived, but to conclude our rapid review of classical source coding, 
we highlight only one more, the Wyner-Ziv problem of rate-distortion of a source
with correlated side information at the decoder \cite{WynerZiv}. 

Quantum Shannon theory has sought to emulate this approach by ``quantizing''
the preceding source coding problems, with the aim of gaining both a fundamental 
and operationally grounded understanding of quantum information. 
The first and most important among these is Schumacher's quantum source 
model and compression problem, whose optimal rate is given by the von Neumann 
entropy \cite{Schumacher1995,SchumacherJozsa1994,Barnum1996,Horodecki:source,Winter:PhD}. 
Compared to the classical case, quantum compression with side information turned out 
to have a surprisingly rich structure, see
\cite{Winter:PhD,Devetak2003,Ahn2006,negative-info,family,Abeyesinghe2009,Devetak2008_2,Yard2009,Oppenheim2008, ZK_cqSW_2018,ZK_cqSW_ISIT_2019}. 
%
On the other hand, rate-distortion theory has received scarce quantum attention 
over the years, and the results of \cite{Barnum:QRD,Winter:QRD,EA-RD-1,EA-RD-2} 
are not as complete as the classical theory.

Here we present and solve a quantum version of the Wyner-Ziv problem, with unlimited 
entanglement, for a distributed source and relative to a convex, additive distortion measure.
Concretely, we consider a pure state source $\ket{\psi}^{ABR}$, with $A$ Alice's 
register, $B$ Bob's and $R$ is a passive reference. 
Furthermore, let $\Delta:\cS(\wet{A}\wet{B}R) \rightarrow \RR$ be a
convex continuous real function on the set $\cS$ of the quantum states of the tripartite 
system $\wet{A}\wet{B}R$. For block length $n$, the source is the i.i.d. extension 
$\psi^{A^nB^nR^n} = \left(\psi^{ABR}\right)^{\ox n}$, and the distortion measure is 
extended to $n$ systems as 
\begin{equation}
  \Delta^{(n)}\left(\rho^{\wet{A}^n\wet{B}^nR^n}\right) 
              = \frac{1}{n} \sum_{i=1}^n \Delta\left(\rho^{\wet{A}_i\wet{B}_iR_i}\right), 
\end{equation}
where on the right hand side 
$\rho^{\wet{A}_i\wet{B}_iR_i} = \Tr_{[n]\setminus i} \rho^{\wet{A}^n\wet{B}^nR^n}$
is the reduced state on the $i$-th systems $\wet{A}_i\wet{B}_iR_i$ 
(partial trace over all other systems). 

An important special case, considered in earlier approaches to quantum rate-distortion
\cite{Winter:QRD,EA-RD-1,EA-RD-2}, is that $\Delta(\rho) = \tr\rho\Delta$ for a
selfadjoint distortion observable $\Delta$ and $\Delta^{(n)}(\rho) = \tr\rho\Delta^{(n)}$, 
where
\[
  \Delta^{(n)} = \frac{1}{n} \sum_{i=1}^n \1^{\ox i-1} \ox \Delta \ox \1^{\ox n-i}.
\]
We note that $\Delta$ is used to denote an observable, whereas $\Delta(\cdot)$ denotes a scalar
function, which in the case of the distortion observable $\Delta$ is given by the trace function. 
This justifies in our view the notation by the same letter, while concretely 
there is little danger of confusion. 
Distortion operators are enough to describe classical rate-distortion 
functions \cite{Shannon:RD,Berger:RD-book}, 
but the theory goes through in the generality of the above convex additive functions. 

\begin{figure*}[ht]
  \centering
 \includegraphics[width=0.81\textwidth]{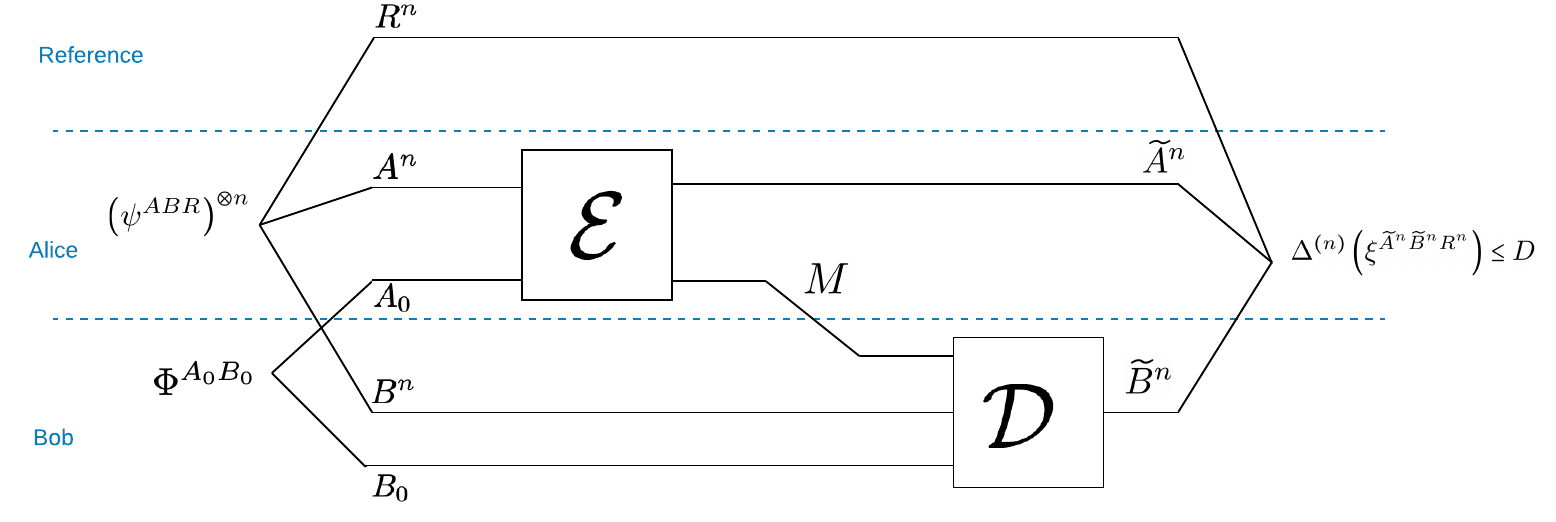}
  \caption{Communication diagramme of the entanglement-assisted rate-distortion state redistribution task: 
           $R$ is a passive reference, $A$ and $\wet{A}$ are Alice's input and output systems, 
           and $B$ and $\wet{B}$ are Bob's input and output systems, respectively.}
  \label{fig:QSI-RD}
\end{figure*}

With these data, an entanglement-assisted compression scheme of block length $n$ 
and bound on the distortion $D \in \RR$ consists of an entangled state
$\Phi^{A_0B_0}$, w.l.o.g. pure, and a pair of CPTP maps $\cE_n:A^nA_0 \rightarrow \wet{A}^nM$
and $\cD:MB_0B^n \rightarrow \wet{B}^n$ (see Fig.~\ref{fig:QSI-RD}), such that the output state 
\begin{equation}\begin{split}
  \label{eq: xi output state}
  &\xi^{\wet{A}^n\wet{B}^nR^n} \\
  &\phantom{==}
   = (\cD_n \ox \id_{\wet{A}^nR^n})\circ(\cE_n \ox \id_{B_0B^nR^n}) \left(\psi^{\ox n} \ox \Phi^{A_0B_0}\right)
\end{split}\end{equation}
satisfies the distortion constraint
\begin{equation}
  \Delta^{(n)}\left(\xi^{\wet{A}^n\wet{B}^nR^n}\right) 
     = \frac{1}{n} \sum_{i=1}^n \Delta\left(\xi^{\wet{A}_i\wet{B}_iR_i} \right) \leq D. 
\end{equation}
The rate of the code is simply $\frac{1}{n}\log|M|$, i.e. the number of qubits sent 
per source system. 

We say that a qubit rate $R_Q$ is \emph{asymptotically achievable with 
asymptotic distortion} $D \in \RR$, 
if there exists a sequence of codes $\{(\mathcal{E}_n,\mathcal{D}_n)\}_n$ such that 
\begin{align*}
    \Delta^{(n)}\left(\xi^{\wet{A}^n\wet{B}^nR^n}\right) \leq D + \delta_n \quad\text{and}\quad
    \frac{1}{n} \log |M| \leq R_Q + \eta_n, 
\end{align*}
for sequences $\delta_n \rightarrow 0$ and $\eta_n \rightarrow 0$ as 
$n \rightarrow \infty$.
The \emph{rate-distortion function} is defined as 
\begin{equation*}
    Q_{ea}(D) \coloneqq \inf \left\{R_Q : (R_Q,D) \text{ is achievable} \right\}, 
\end{equation*}
where the subscript `ea' reminds that the codes are
assisted by entanglement. 

We stress that $A$, $B$ and $R$ are arbitrary quantum systems here, and so 
are $\wet{A}$ and $\wet{B}$: the latter need not bear any relation to $A$ and $B$,
their names are chosen entirely as a reminder that `$A$'s belong to Alice
(compressor/sender) and `$B$'s belong to Bob (receiver/decoder). 

Note that for very small $D$, there may be no codes with distortion $D$,
and then $Q_{ea}(D)=+\infty$ by convention. Once codes exist, $Q_{ea}(D)$ is a non-negative 
real number, and for sufficiently large $D$, for example $D \geq \max_\rho \Delta(\rho)$,
$Q_{ea}(D)=0$ because every pair of maps is an eligible code. 

\medskip
\begin{remark}
In contrast to other previous work \cite{EA-RD-2}, which imposed a
distinction between data to be compressed and side information, we 
think that our present model is both simpler and more natural, by applying a 
global distortion measure jointly to Alice's and Bob's parts of the 
output, as well as to the reference. 
\end{remark} 

\medskip
In the rest of the paper, we present our main result in Section \ref{sec:main}, 
which is a single-letter characterization of the rate-distortion function, 
for general sources and arbitrary convex and continuous distortion measures: we first use quantum state redistribution (QSR) to build 
a protocol giving us an achievable rate, and then show that it is essentially 
optimal. Then, in Section \ref{sec:special}, we discuss a number 
of special cases of the considered scenario, showing how the rate-distortion 
setting generalizes all sorts of conventional quantum source coding problems, 
some of which have appeared in the previous literature. The original source coding 
problems are recovered, after a fashion, in the limit of zero (per-copy) error. 
We conclude with a discussion of the result and open problems in 
Section \ref{sec:discussion}.

\medskip
\textbf{Notation and basic facts.} 
Quantum systems are associated with (in this paper: finite dimensional) 
Hilbert spaces $A$, $R$, \ldots, whose dimensions are denoted by $|A|$, $|R|$, \ldots, respectively. 
We identify states on a system $A$ with their density operators, $\cS(A)$, 
which is the set of all positive semidefinite matrices with unit trace. 
We use the notation $\phi= \ketbra{\phi}{\phi}$ as the density operator of
the pure state vector $\ket{\phi} \in A$. 
 
The von Neumann entropy is $S(\rho) = - \Tr\rho\log\rho$, 
$\log$ by default being the binary logarithm.
The conditional entropy and the conditional mutual information, $S(A|B)_{\rho}$ and $I(A:B|C)_{\rho}$,
respectively, are defined in the same way as their classical counterparts: 
\begin{align*}
  S(A|B)_{\rho}   &= S(AB)_\rho-S(B)_{\rho}, \text{ and} \\ 
  I(A:B|C)_{\rho} &= S(A|C)_\rho-S(A|BC)_{\rho} \\
                  &= S(AC)_\rho+S(BC)_\rho-S(ABC)_\rho-S(C)_\rho.
\end{align*}

The fidelity between two states $\rho$ and $\xi$ is defined as 
\(
 F(\zeta,\xi) = \left\|\sqrt{\zeta}\sqrt{\xi}\right\|_1 
              = \Tr \sqrt{\zeta^{\frac{1}{2}} \xi \zeta^{\frac{1}{2}}},
\) 
with the trace norm $\|X\|_1 = \Tr|X| = \Tr\sqrt{X^\dagger X}$.
If one of the two states is pure, $F(\zeta,\xi)^2=\Tr \zeta\xi$.
In general, the fidelity relates to the trace distance in the following well-known 
way \cite{Fuchs1999}:
\begin{equation}
  \label{eq:FvdG}
  1-F(\zeta,\xi) \leq \frac12\|\zeta-\xi\|_1 \leq \sqrt{1-F(\zeta,\xi)^2}.
\end{equation}

As we consider information theoretic limits, we have occasion to 
refer to many isomorphic copies of a single system, say $A$, which
are always referred to by the same capital letter with a running 
index, i.e. $A_1$, $A_2$, \ldots, $A_n$; a block (tensor 
product) of the first $n$ of these systems is 
written $A^n = A_1 A_2\cdots A_n = A_1 \ox \cdots \ox A_n$.
More generally for a set $I\subset\mathbb{N}$ of indices, $A_I = \bigotimes_{i\in I} A_i$.
We use the combinatorial shorthand $[n] = \{1,2,\ldots,n\}$, so that $A^n = A_{[n]}$.

\section{Single-letter characterization\protect\\ of the rate-distortion function}
\label{sec:main} 
In this section, we solve the quantum rate-distortion problem introduced above 
(depicted in Fig.~\ref{fig:QSI-RD}).
First, we construct a protocol for a certain achievable rate, coming directly 
from quantum state redistribution (QSR); after that, we show the converse. 
%
%
QSR is a quantum compression protocol where both encoder and decoder have access to side information. We introduce this protocol more in subsection~\ref{subsection:QSR}.

\subsection{An achievable rate from QSR}
Assume that we have two 
CPTP maps $\cE_0:A \rightarrow \wet{A}Z$ and $\cD_0:BZ \rightarrow \wet{B}$
such that for 
\begin{align}
  \label{eq:xi_final_state}
  \xi^{\wet{A}\wet{B}R} = (\cD_0 \ox \id_{\wet{A}R})\circ(\cE_0 \ox \id_{BR})\psi^{ABR},
\end{align}
it holds $\Delta(\xi) \leq D$. Then, an achievable
asymptotic rate for distortion $D$ is given by $R = \frac12 I(Z:R|B)_\varphi$, 
with the state after the action of $\cE_0$, $\varphi^{\wet{A}ZBR} = (\cE_0 \ox \id_{BR})\psi^{ABR}$. 

\begin{proof}
We prove the achievability of the above rate as follows.
Purify $\cE_0$ to a Stinespring isometry $U:A\hookrightarrow \wet{A}ZW$ \cite{Stinespring1955},
so after applying it to the source we have the pure state
$\ket{\varphi}^{\wet{A}WZBR} = (U\ox\1_{BR})\ket{\psi}^{ABR}$. On block length $n$, 
use QSR, assisted by suitable entanglement, as a subroutine, to send $Z^n$ 
from Alice to Bob, with $\wet{A}^nW^n$ as Alice's side information and $B^n$ as Bob's. 
The block trace distance error of the QSR protocol goes to $0$ as $n\rightarrow \infty$,
so we get distortion $\leq D+o(1)$, using the continuity of $\Delta$. 
The rate, which is due to QSR, is $\frac12 I(Z:R|B)_\varphi$ \cite{Devetak2008_2,Yard2009,Oppenheim2008}.
\end{proof}

\medskip
This coding theorem motivates the introduction of the following single-letter function,
\begin{equation}\begin{split}
  \label{eq:rd-function-almost}
  Q'(D) &:= \inf_{\cE_0, \cD_0} \frac12 I(Z:R|B)_\varphi \text{ s.t. } \cE_0:A \rightarrow \wet{A}Z \text{ and } \\
        &\phantom{=======}
         \cD_0:BZ \rightarrow \wet{B}  \text{ CPTP, and } \Delta(\xi) \leq D,
\end{split}\end{equation}
where $\xi$ is defined in Eq.~(\ref{eq:xi_final_state}), and the conditional mutual information is with respect to the state $\varphi^{\wet{A}ZBR}= (\cE_0 \ox \id_{BR})\psi^{ABR}$. 
With this notation, what we have just proved amounts to
\begin{equation}
  Q_{ea}(D) \leq \lim_{D'\rightarrow D+} Q'(D'). 
\end{equation}
Since $Q'$ is monotonically non-increasing with $D$, the latter limit from the right 
is also a supremum, equal to $\displaystyle{\sup_{D'>D} Q'(D')}$. 

Before we go on, we analyze first some mathematical properties of the new
function. Note that a major difficulty, both practically and for the theoretical 
development, is the unbounded nature of the auxiliary system $Z$. 
Define
\begin{equation}\begin{split}
  \label{eq: D_0}
  D_0 &:= \inf D \text{ s.t. } Q'(D) < +\infty \\
      & = \inf D \text{ s.t. } \exists\, \cE_0,\cD_0\ \Delta(\xi) \leq D. 
\end{split}\end{equation}
By definition, $Q'(D) = +\infty$ for all $D<D_0$ and $Q'(D)$ is finite for 
all $D>D_0$. Because of the dimensionality issue, $Q'(D_0)$ may or may not be 
finite. 

\medskip
\begin{lemma}
  \label{lemma:Q-convex}
  On $[D_0,\infty)$, $Q'$ is a monotonically non-increasing, convex function of $D$.
  Consequently, on the open interval $(D_0,\infty)$ it is also continuous. 
\end{lemma}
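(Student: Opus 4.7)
\emph{Proof plan.} Monotonicity is immediate from the definition: for $D_1 \leq D_2$, any pair $(\cE_0,\cD_0)$ satisfying $\Delta(\xi)\leq D_1$ also satisfies $\Delta(\xi)\leq D_2$, so the feasible set defining $Q'(D_2)$ contains that of $Q'(D_1)$, and taking the infimum of the same objective over a larger set cannot increase its value.

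For convexity, my plan is the standard time-sharing trick implemented by a classical flag register inside $Z$. Fix $D_1,D_2 \in [D_0,\infty)$ with $Q'(D_i) < +\infty$ (otherwise the inequality to be shown is vacuous), and $\lambda \in [0,1]$. For each $\epsilon>0$, pick codes $(\cE_0^{(1)},\cD_0^{(1)})$ with auxiliary system $Z_1$ and $(\cE_0^{(2)},\cD_0^{(2)})$ with auxiliary system $Z_2$, achieving distortion $\leq D_1, D_2$ and rates within $\epsilon$ of $Q'(D_1), Q'(D_2)$, respectively. I would define a new auxiliary register $Z = TZ'$, with $T$ a classical flag and $Z' = Z_1 \oplus Z_2$, and set
\[
  \cE_0(\rho) = \lambda\, \cE_0^{(1)}(\rho)\otimes\ketbra{0}{0}_T + (1-\lambda)\, \cE_0^{(2)}(\rho)\otimes\ketbra{1}{1}_T,
\]
with the outputs embedded into $Z'$, paired with a decoder that reads out $T$ and then applies $\cD_0^{(1)}$ or $\cD_0^{(2)}$ accordingly. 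The reduced state $\xi = \lambda\xi_1 + (1-\lambda)\xi_2$ then satisfies $\Delta(\xi)\leq \lambda D_1 + (1-\lambda)D_2$ by convexity of $\Delta$.

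To control the rate, I apply the chain rule $I(Z:R|B)_\varphi = I(T:R|B)_\varphi + I(Z':R|BT)_\varphi$. Since $\cE_0$ acts only on $A$, the marginal on $BR$ is the source marginal $\psi^{BR}$ regardless of which branch is selected, so $T$ is independent of $BR$ and $I(T:R|B)_\varphi = 0$. Conditioning on the classical variable $T$ then splits the remaining term as $\lambda\, I(Z_1:R|B)_{\varphi_1} + (1-\lambda)\, I(Z_2:R|B)_{\varphi_2}$, so the composite code has rate at most $\lambda Q'(D_1)+(1-\lambda)Q'(D_2)+\epsilon$. Sending $\epsilon\to 0$ gives the convexity inequality.

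Continuity on the open interval $(D_0,\infty)$ is then a standard consequence of convex analysis: any real-valued convex function on an open real interval is automatically continuous there. The only step that needs care is the chain-rule split of $I(Z:R|B)_\varphi$, which works cleanly precisely because $T$ has been made classical; the unboundedness of $|Z|$ is not an obstacle since $Q'$ imposes no a priori bound on the auxiliary dimension. I do not foresee a serious difficulty, but if one were to surface it would most likely be at the boundary point $D=D_0$, where $Q'$ may be infinite and continuity need not hold — this is precisely why the continuity statement is confined to the open interval.
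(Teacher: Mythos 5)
Your proposal is correct and follows essentially the same time-sharing argument as the paper's proof: a classical flag register appended to the auxiliary system $Z$, convexity of $\Delta$ to bound the mixed distortion, and the chain rule together with independence of the flag from $BR$ to split the conditional mutual information into the convex combination. The only cosmetic difference is that you take a direct sum $Z_1 \oplus Z_2$ for the two auxiliary registers whereas the paper embeds both into a common larger space; the continuity conclusion is drawn from the same folklore fact about convex functions on an open interval.
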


\begin{proof}
The monotonicity was already remarked to follow from the definition. 
For the convexity, we verify Jensen's inequality, that is we start with maps 
$\cE_1,\cD_1$ eligible for distortion $D_1$ [as defined in Eq.~(\ref{eq:rd-function-almost})], 
and $\cE_2,\cD_2$ eligible for distortion $D_2$,
and $0\leq p \leq 1$. By embedding into larger Hilbert spaces if necessary, we
can w.l.o.g. assume that the maps act on the same systems for $i=1,2$.
We define the following two maps:
\begin{align*}
  \cE(\rho^A) &:= p \cE_1(\rho) \ox \proj{1}^{Z'} + (1-p) \cE_2(\rho) \ox \proj{2}^{Z'}, \\
  \cD(\sigma^{BZ}) &:= \cD_1(\bra{1}^{Z'} \sigma \ket{1}^{Z'}) + \cD_2(\bra{2}^{Z'} \sigma \ket{2}^{Z'}).
\end{align*}
They evidently realise the output state $\xi = p\xi_1 + (1-p)\xi_2$, where $\xi_i = (\cD_i \ox \id_{\wet{A}R})\circ(\cE_i \ox \id_{BR})\psi^{ABR}$ for $i=1,2$. Hence by convexity the distortion is bounded as
$\Delta(\xi) \leq p \Delta(\xi_1) + (1-p) \Delta(\xi_2) \leq pD_1+(1-p)D_2=D$. 
Thus, 
\[\begin{split}
  Q'(D) &\leq\frac{1}{2} I(ZZ':R|B)_\xi \\
        &=    \frac{p}{2} I(Z:R|B)_{\xi_1} + \frac{(1-p)}{2} I(Z:R|B)_{\xi_2},
\end{split}\]
and taking the infimum over maps $\cE_i,\cD_i$ shows convexity. 

The continuity statement follows from a mathematical folklore fact, stating that 
any real-valued function that is convex on an interval, is continuous on the 
interior of the interval, cf.~\cite[Prop.~2.17]{Barbu-book}.
\end{proof}

\medskip
This lemma shows that the only possible discontinuity of $Q'$ is at $D_0$, and so we 
are motivated to define its right-continuous extension, which differs from $Q'$ only 
possibly at $D_0$:
\begin{equation}\label{overline Q(D)}
  \overline{Q}(D) := \sup_{D'>D} Q'(D')
                   = \begin{cases}
                       +\infty              & \text{ if } D < D_0, \\
                       \sup_{D'>D_0} Q'(D') & \text{ if } D = D_0, \\
                       Q'(D)                & \text{ if } D > D_0.
                     \end{cases}
\end{equation}
Our achievability result from the beginning of the present section can now be 
expressed more concisely as follows. 

\medskip
\begin{proposition}
\label{prop:achieve}
For any source $\psi^{ABR}$ and any convex distortion measure $\Delta(\cdot)$,
it holds for all distortion values $D$ that 
\[
  Q_{ea}(D) \leq \overline{Q}(D).
\]
(Note that this is trivially true for $D < D_0$, as then the right hand side,
and as we shall see also the left hand side, is $+\infty$.)
\end{proposition}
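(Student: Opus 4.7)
The plan is to read this proposition as a concise repackaging of the bound $Q_{ea}(D) \leq \lim_{D' \to D+} Q'(D')$, which was already established from the QSR-based construction at the start of this subsection, using the definition (\ref{overline Q(D)}) of $\overline{Q}$ and the monotonicity of $Q'$ recorded in Lemma~\ref{lemma:Q-convex}.

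For $D<D_0$, the right-hand side is $+\infty$ by definition, so the bound is trivial. The parenthetical assertion that $Q_{ea}(D)=+\infty$ in this regime is a minor side remark, which I would justify by the standard averaging/convexity argument: any hypothetical sequence of block-length-$n$ codes with per-copy distortion tending to $D<D_0$ would, after averaging the $n$ coordinate-reduced output states and invoking the convexity of $\Delta$, yield a single-letter pair $\cE_0,\cD_0$ (making free use of the available entanglement to implement the uniform mixture over coordinates) with $\Delta(\xi)\leq D<D_0$, contradicting the very definition of $D_0$.

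For $D\geq D_0$, given $\epsilon>0$ I would first pick $D'>D$ close enough to $D$ that $Q'(D')\leq\overline{Q}(D)+\epsilon/2$---possible because $Q'$ is monotonically non-increasing on $[D_0,\infty)$ with right-sided limit $\overline{Q}(D)$ at $D$---and then select CPTP maps $\cE_0:A\to\wet{A}Z$ and $\cD_0:BZ\to\wet{B}$ with $\Delta(\xi)\leq D'$ and $\tfrac12 I(Z:R|B)_\varphi\leq Q'(D')+\epsilon/2\leq\overline{Q}(D)+\epsilon$. The QSR-based achievability argument already given at the start of this subsection---Stinespring-dilate $\cE_0$, then transmit $Z^n$ by QSR with Alice's side information $\wet{A}^nW^n$ and Bob's side information $B^n$---delivers, for each block length $n$, a code whose rate tends to $\tfrac12 I(Z:R|B)_\varphi$ and, by continuity of $\Delta$ combined with the vanishing block trace-distance error of QSR, whose distortion tends to $\Delta(\xi)\leq D'$. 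A standard diagonal extraction along $\epsilon_k=1/k$, $D'_k\to D+$ and sufficiently fast $n_k\to\infty$ then produces a single code sequence of rate tending to $\overline{Q}(D)$ and distortion tending to $D$ from above, establishing the claimed inequality.

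The only delicate point, where I would focus attention, is the diagonal extraction and the matching of the operational notion of ``achievable asymptotic rate for distortion $D$'' (sequences with distortion converging to $D$) against the codes we actually construct, whose distortion converges to values slightly above $D$; this mismatch is precisely the reason why $\overline{Q}$, the right-continuous envelope of $Q'$, rather than $Q'$ itself, is the natural upper bound here, and why no stronger statement can be made at the boundary point $D_0$ without further structure on the source and distortion measure.
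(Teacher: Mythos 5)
Your proof is correct and follows essentially the same route as the paper: the proposition is nothing more than the achievability bound $Q_{ea}(D) \leq \lim_{D'\to D+} Q'(D') = \sup_{D'>D} Q'(D')$ established from the QSR coding theorem, rewritten using the definition~(\ref{overline Q(D)}) of $\overline{Q}$, which is why the paper puts a $\qed$ right after the statement; you merely spell out the standard diagonal extraction over $D'_k\to D+$ with near-optimal maps and growing block lengths that the paper leaves implicit. Your parenthetical argument that $Q_{ea}(D)=+\infty$ for $D<D_0$ --- averaging the per-coordinate output marginals and invoking the convexity of $\Delta$ to manufacture a single-letter pair with distortion $\leq D$ --- anticipates precisely the single-letterization step used in the paper's converse (hence the paper's ``as we shall see''), so there is no gap.
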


\subsection{Main result}
%
\begin{theorem}
\label{thm:main}
For any source state $\psi^{ABR}$ and any convex distortion measure $\Delta$,
it holds for all distortion values $D$ that 
\[
  Q_{ea}(D) = \overline{Q}(D).
\]
\end{theorem}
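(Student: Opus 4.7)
\begin{proof-of}[of Theorem~\ref{thm:main}]
The bound $Q_{ea}(D) \leq \overline{Q}(D)$ is already secured by the proposition immediately preceding this theorem, via the QSR-based protocol. The work of the present proof is therefore the matching converse $Q_{ea}(D) \geq \overline{Q}(D)$, which the plan is to establish by a standard single-letterisation of any block code.

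Start from a sequence of codes of length $n$ with rates $\frac{1}{n}\log|M_n| \to R$ and per-copy distortions $D_n \to D$. After purifying Alice's encoder via Stinespring, the post-encoding pure state $\varphi^{(n)}$ lives on $\wet{A}^n M_n W_n B^n B_0 R^n$, and the straightforward dimension bound (from $S(M_n) \leq \log|M_n|$ together with the Araki--Lieb inequality for $S(M_n | R^n B^n B_0)$) gives
\[
  \log|M_n| \geq \tfrac12 I(M_n : R^n \,|\, B^n B_0)_{\varphi^{(n)}}.
\]
Chain-ruling the right-hand side and using that $\psi^{B_i R_i}$ is i.i.d.\ and hence independent of $(R_{<i}, B_{\neq i}, B_0)$ before encoding (whence $I(R_{<i} B_{\neq i} B_0 : R_i | B_i) = 0$), each summand rewrites as $I(Z_i : R_i | B_i)_{\varphi^{(n)}}$ with $Z_i := M_n R_{<i} B_{\neq i} B_0$.

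The key step is to realise each $Z_i$ as the auxiliary register of a genuine single-letter code on the source $\psi^{A_i B_i R_i}$. Alice's encoder $\cE_0^{(i)}: A_i \to \wet{A}_i Z_i$ internally simulates the other $n-1$ fresh copies of the source and the entanglement $\Phi^{A_0 B_0}$ on ancillas, runs the block encoder $\cE^{(n)}$, and outputs $\wet{A}_i$ alongside $Z_i$; Bob's decoder $\cD_0^{(i)}: B_i Z_i \to \wet{B}_i$ extracts $M_n B_0$ together with the simulated $B_{\neq i}$ from $Z_i$, combines them with the genuine $B_i$, and runs $\cD^{(n)}$, keeping the $i$-th output. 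By construction the resulting marginal coincides with $\xi^{\wet{A}_i \wet{B}_i R_i}$, and the per-letter distortions $D_i^{(n)} := \Delta(\xi^{\wet{A}_i \wet{B}_i R_i})$ average to $D_n$. The definition of $Q'$ then gives $Q'(D_i^{(n)}) \leq \tfrac12 I(Z_i : R_i | B_i)$ for every $i$, so that
\[
  \frac{1}{n}\log|M_n| \geq \frac{1}{n}\sum_{i=1}^n Q'(D_i^{(n)}) \geq \frac{1}{n}\sum_{i=1}^n \overline{Q}(D_i^{(n)}) \geq \overline{Q}(D_n),
\]
the last step using the monotonicity and convexity of $\overline{Q}$ inherited from Lemma~\ref{lemma:Q-convex}.

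The main obstacle I anticipate is analytic rather than combinatorial: one must transfer the convexity of $Q'$ from $(D_0,\infty)$ to $\overline{Q}$ on $[D_0,\infty)$ through its right-continuous definition, and establish lower semi-continuity of $\overline{Q}$ so that $\liminf_n \overline{Q}(D_n) \geq \overline{Q}(D)$ as $n \to \infty$, delivering $R \geq \overline{Q}(D)$ and hence, together with the achievability recalled at the start, the claimed equality $Q_{ea}(D) = \overline{Q}(D)$. A minor bookkeeping point to verify, though not a serious obstacle, is that the auxiliary $Z_i$ produced in the single-letterisation, whose dimension grows with $n$, still qualifies as a legitimate (finite-dimensional) choice in the infimum defining $Q'$, so that the per-letter bound is applicable as written.
\end{proof-of}
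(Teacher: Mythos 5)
Your proposal is correct and follows essentially the same route as the paper: the same entropy/dimension bound $2\log|M|\geq I(M:R^n|B^nB_0)$, the same chain-rule single-letterisation with the vanishing term $I(R_{<i}B_{[n]\setminus i}B_0:R_i|B_i)=0$, and the same simulation of the block code as a single-letter pair $(\cE_i,\cD_i)$ whose output marginal is $\xi^{\wet{A}_i\wet{B}_iR_i}$. The only cosmetic differences are that you retain $R_{<i}$ inside $Z_i$ (the paper discards it by data processing, which is immaterial since Alice's simulation holds it locally) and that you close the argument via convexity and lower semicontinuity of $\overline{Q}$ rather than, as the paper does, via convexity and monotonicity of $Q'$ followed by the limit $\delta\to 0^+$; both endgames are equivalent.
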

\begin{proof}
In light of Proposition \ref{prop:achieve}, stating that
$Q_{ea}(D) \leq \overline{Q}(D)$, we only have to prove the opposite 
inequality, i.e.~$Q_{ea}(D) \geq \overline{Q}(D)$, in other words the converse. 

Towards this end, consider a block length $n$ code of distortion
$\Delta^{(n)}\left(\xi^{\wet{A}^n\wet{B}^nR^n}\right) \leq D+\delta$ for the output state defined in Eq.~(\ref{eq: xi output state}). The number 
of qubits, $\log|M|$, can be lower bounded as follows, with respect to 
the encoded state $\sigma^{MB_0\wet{A}^nB^nR^n} = (\cE\ox\id_{B_0B^nR^n})(\psi^{A^nB^nR^n}\ox\Phi^{A_0B_0})$:
\begin{align}
  2\log|M| &\geq 2 S(M) \nonumber\\
           &\geq I(M:R^n|B^nB_0) \nonumber\\
           &=    I(MB_0:R^n|B^n) - \underbrace{I(B_0:R^n|B^n)}_{=0} \nonumber\\
           &=    I(Z:R^n|B^n)                           \qquad [\text{with } Z \equiv MB_0] \nonumber\\
           &=    \sum_{i=1}^n I(Z:R_i|B^nR_{<i}) 
                       + \sum_{i=1}^n \underbrace{I(R_{<i}B_{[n]\setminus i}:R_i|B_i)}_{=0} \nonumber\\
           &=    \sum_{i=1}^n I(ZR_{<i}B_{[n]\setminus i}:R_i|B_i)                          \nonumber\\
           &\geq \sum_{i=1}^n I(Z_i:R_i|B_i),           \qquad [\text{with } Z_i \equiv Z B_{[n]\setminus i}]
                  \label{eq:lower-bound-1}
\end{align}
where in the first two inequalities we use standard entropy inequalities;
the equation in the third line is due to the chain rule, and the second 
conditional information is $0$ because $B_0$ is independent of $B^nR^n$;
the fourth line introduces a new register $Z$, noting that 
the encoding together with the entangled state defines a CPTP map 
$\cE_0:A^n \rightarrow \wet{A}^nZ$, via $\cE_0(\rho) = (\cE\ox\id_{B_0})(\rho\ox\Phi^{A_0B_0})$;
in the fifth we use the chain rule iteratively, and in the second term we introduce 
each summand is $0$ because for all $i$, $R_{<i}B_{[n]\setminus i}$ is independent of $R_iB_i$;
in the sixth line we use again the chain rule for all $i$, and in the
last line strong subadditivity (data processing).

\medskip\noindent
For the $i$-th copy $\psi^{A_iB_iR_i}$, now define maps 
$\cE_i:A_i \rightarrow \wet{A}_iZ_i$ and $\cD_i:B_iZ_i \rightarrow \wet{B}_i$,
as follows:

\begin{itemize}
\setlength{\itemindent}{3mm}
\item[$\cE_i$:] Alice tensors her system $A_i$ with a dummy state $\psi^{\ox[n]\setminus i}$ and
      with $\Phi^{A_0B_0}$ (note that all systems are in her possession). 
      Then she applies $\cE:A^nA_0 \rightarrow \wet{A}^nM$, and sends 
      $Z_i := M B_0 B_{[n]\setminus i}$ to Bob, while keeping $\wet{A}_i$.
      Everything else, i.e. $R_{[n]\setminus i} \wet{A}_{[n]\setminus i}$, is trashed.
\item[$\cD_i$:] Bob applies $\cD$ to $Z_iB_i = M B_0 B^n $ and keeps 
      $\wet{B}_i$, trashing the rest $\wet{B}_{[n]\setminus i}$.
\end{itemize}
By definition, the output state 
\[
  \zeta^{\wet{A}_i\wet{B}_iR_i} = (\cD_i \ox \id_{\wet{A}_iR_i})\circ(\cE_i \ox \id_{B_iR_i})\psi^{A_iB_iR_i}
\]
equals $\xi^{\wet{A}_i\wet{B}_iR_i}=\Tr_{[n]\setminus i} \xi^{\wet{A}^n\wet{B}^nR^n}$, 
and with the $i$-th letter distortion $D_i := \Delta\left(\zeta^{\wet{A}_i\wet{B}_iR_i}\right)$ we have
\[\begin{split}
  D+\delta &\geq \Delta^{(n)}\left(\xi^{\wet{A}^n\wet{B}^nR^n}\right)     \\
           &=    \frac{1}{n} \sum_{i=1}^n \Delta\left(\xi^{\wet{A}_i\wet{B}_iR_i}\right) \\
           &=    \frac{1}{n} \sum_{i=1}^n \Delta\left(\zeta^{\wet{A}_i\wet{B}_iR_i}\right) 
            =    \frac{1}{n} \sum_{i=1}^n D_i.
\end{split}\]
Thus, we obtain, with respect to the states $(\cE_i \ox \id_{B_iR_i})\psi^{A_iB_iR_i}$
for $i=1\ldots,n$, 
\begin{align}\label{eq: converse Q'}
  \frac{1}{n}\log|M| &\geq \frac{1}{n} \sum_{i=1}^n \frac12 I(Z_i:R_i|B_i) \nonumber\\
                     &\geq \frac{1}{n} \sum_{i=1}^n Q'(D_i)               \nonumber\\
                     &\geq Q'\left( \frac{1}{n} \sum_{i=1}^n D_i\right)  \nonumber\\
                     &\geq Q'(D+\delta), 
\end{align}
continuing from Eq.~(\ref{eq:lower-bound-1}), then by definition of $Q'(D_i)$ since 
the pair $(\cE_i,\cD_i)$ results in distortion $D_i$, in the next line 
by convexity and finally by monotonicity of $Q'$ (Lemma \ref{lemma:Q-convex}).

Since this has to hold for all $\delta > 0$ and in the limit $n\rightarrow\infty$, 
the claim follows.
\end{proof}

\medskip
\begin{remark}
The real problem with Theorem \ref{thm:main}, and the formula (\ref{eq:rd-function-almost}), 
is that while the rate-distortion function on the face of it is single-letter, it is
still not necessarily computable, because of the infimum over CPTP maps $\cE_0:A \rightarrow \wet{A}Z$ 
and $\cD_0:BZ \rightarrow \wet{B}$, with -- crucially -- unbounded quantum register $Z$. 

With a bounded $|Z|$, the domain of optimization would become compact, and 
this would not only make $Q'(D)$ computable (in the sense that it can be approximated 
to arbitrary degree), and in fact a minimum, hence itself a continuous function, 
but also $D_0$ would be computable, and we would get $\overline{Q} \equiv Q'$. 

Without this information, and we have no evidence of finiteness or required
infinity either way, in general, the rate-distortion function is only a 
formal expression, and shares the issue of computability or approximability 
with an astonishing number of other, similar capacity formulas in quantum 
Shannon theory: 
the entangling power of a bipartite unitary \cite{ent-power:U}, 
the symmetric side-channel assisted quantum capacity \cite{Q_ss} 
and the analogous private capacity \cite{P_ss}, 
the squashed entanglement \cite{squashed} 
the so-called conditional entanglement of mutual information (CEMI) \cite{CEMI}, 
and the quantum information bottleneck function \cite{QIBM}.
\end{remark}

\medskip
In the rest of the paper, we will show how this theorem permits a new 
view of various quantum source coding problems that have been considered 
in the literature previously. In all these cases, this rests on writing 
the pure state or the ensemble fidelity (per-copy) of a coding scheme as a 
distortion in the above sense.

\section{Entanglement-assisted source coding emerging in the limit of unit per-copy fidelity}
\label{sec:special}
In this section we are going to specialise the above general theory to 
the traditional setting of quantum source compression, where the distortion 
measure is the \emph{infidelity of decoding}, i.e. one minus the fidelity 
(squared) between the decoded state and an ideal state. This means that 
in all the distortion functions defined in the sequel, $D_0=0$. 
Note, however, that unlike the usual setting of Schumacher's data compression, 
we allow for potentially unlimited entanglement, which affects the rate in 
certain scenarios.

\subsection{Schumacher's quantum data compression with an entanglement fidelity criterion}
\label{subsec:Schumacher-pure}
In \cite{Schumacher1995,SchumacherJozsa1994}, quantum source coding is 
described with a pure state $\ket{\psi}^{AR}$ for the source, so that $B$ is 
trivial (one-dimensional) and so is $\wet{A}$, while $\wet{B} = \widehat{A} \simeq A$. 
The use of the (block) fidelity as success criterion of the code there, 
would correspond to the distortion measure $1-F(\xi,\psi)$, 
which would be eligible, being convex and continuous in the state. 
Here, we will however consider $\Delta(\xi) = 1-F(\xi,\psi)^2$, because it comes
from a distortion operator, $\Delta = \1 - \psi^{\widehat{A}R}$, which will suit us better 
in the later developments. Note that for regular source coding, this is not 
an important change, since there anyway the focus is on $F(\xi,\psi) \approx 1$; 
to be precise, for the $n$-fold i.i.d. repetition $\psi^{\ox n}$ and the 
$n$-system output state $\xi^{\widehat{A}^nR^n}$, one demands 
$F(\xi,\psi^{\ox n}) \approx 1$ in \cite{Schumacher1995,SchumacherJozsa1994}. 
Under the present rate-distortion perspective, however, we consider the 
weaker (implied) criterion $\Delta^{(n)}(\xi) = \tr\xi\Delta^{(n)} \approx 0$. 
Of course, rate-distortion theory makes good sense of all values of $D$, but 
we shall focus on the small ones to preserve the relation with source coding.
Schumacher's date compression implies that for all $D\geq 0$, 
$Q_{ea}(D) \leq Q_{ea}(0) \leq S(A)_\psi$. The latter bound is actually an 
equality, as it can be seen as follows (cf.~\cite{Horodecki:source}). Consider 
a $D>0$, and consider pairs of CPTP maps $\cE_0$ and $\cD_0$ eligible for 
$\overline{Q}(D)$, then Theorem~\ref{thm:main} implies the following  converse bound 
considering per-copy fidelity:
\begin{align}
  \overline{Q}(D) &=\inf \frac12 I(Z:R)_\varphi           \nonumber\\
        &\geq \inf \frac12 I(\widehat{A}:R)_\xi \nonumber\\
        &\geq \frac12 I(A:R)_\psi - 2\sqrt{D}\log|R| - g\left(\!\sqrt{D}\right),
         \label{eq:Schumi-lower}
\end{align}
where the first line is by definition, the second invoking data processing, 
and the last one by first observing that by Eq.~(\ref{eq:FvdG}), 
$\frac12 \|\xi-\psi\|_1 \leq \sqrt{D}$ and then using the Alicki-Fannes 
continuity bound for the conditional entropy \cite{Alicki2004} in the form 
given in \cite{Winter2016}: 
for two states with $\frac12 \left\|\rho^{UV}-\sigma^{UV} \right\|_1 \leq \delta$, 
\begin{equation}
  \label{thm:AFW}
  \left| S(U|V)_\rho - S(U|V)_\sigma \right| \leq 2\delta\log|U| + g(\delta),
\end{equation}
with $g(\delta) = (1+\delta)\log(1+\delta)-\delta\log\delta$. 

Thus, from the bound $ Q_{ea}(0) \leq S(A)_\psi$ and Eq.~(\ref{eq:Schumi-lower}), 
by letting $D\rightarrow 0$, we get
$Q_{ea}(0) = \overline{Q}(0) =\frac12 I(A:R)_\psi = S(A)_\psi$. This is the 
same rate as Schumacher's \cite{Schumacher1995,SchumacherJozsa1994}, but we stress 
that we get the optimality (lower bound) under the weaker assumption of the 
per-letter fidelity, rather than the block fidelity, being close to $1$.


\subsection{Schumacher's quantum data compression for an ensemble source}
\label{subsec:Schumacher}
Schumacher \cite{Schumacher1995} also introduced another model of the quantum 
source, as an ensemble $\{p(x),\ket{\psi_x}^A\}$, where $x$ ranges over a 
discrete set. One can of course describe this kind of source by a cq-state
$\omega^{AR} = \sum_x p(x) \proj{\psi_x}^A \ox \proj{x}^R$, but the rate-distortion 
setting allows to do it differently: not by changing the source state, which
remains the pure state $\ket{\psi}^{AR} = \sum_x \sqrt{p(x)} \ket{\psi_x}^A \ox \ket{x}^R$, 
but instead with a different distortion operator:
\begin{equation}\label{eq:D ensemble Schumacher}
  \Delta = \1 - \sum_x \proj{\psi_x}^{\widehat{A}} \ox \proj{x}^R.
\end{equation}
This does not change the task and the distortion measure at hand, but it 
allows us to use the framework introduced above. 
For this ensemble source, the output state of the composite system is 
\[\begin{split}
  \xi^{\widehat{A}^n  R^n} 
  &= (\cD_n \ox \id_{R^n})\circ(\cE_n \ox \id_{B_0 R^n})\left({(\psi^{AR}})^{\ox n} \ox \Phi^{A_0B_0}\right),
\end{split}\]
and the output state of the $i$-th system is 
\begin{align*}
    \xi^{\widehat{A}_i R_i} = \Tr_{[n]\setminus i} \xi^{\widehat{A}^n  R^n}=\sum_{x_i,x_i^{'}} \sqrt{p(x_i)p(x_i^{'})} \xi_{x_i ,x_i^{'}}^{\widehat{A}_i  R_i^{'}} \ox \ketbra{x_i}{{x'}_i}^{X_i},
\end{align*}
where 
$\xi_{x_i,x_i^{'}}^{\widehat{A}_i R_i^{'}}=\sum_{x_{[n] \setminus i}}  p(x_{[n] \setminus i}) \xi_{x^n, x'_i x_{[n] \setminus i}}^{\widehat{A}_i  R_i^{'}}$.
%
%
Measuring the distortion with the distortion operator of Eq.~(\ref{eq:D ensemble Schumacher}) is equivalent to measuring  per-copy fidelity for the output state
$\xi^{\widehat{A}_i R_i}$: 
\begin{align*}
    \tr \xi^{\widehat{A}_i  R_i} \Delta 
    &= 1 - \sum_{x_i} p(x_i) \tr \xi_{x_i ,x_i}^{\widehat{A}_i  {R'}_i}\psi_{x_i}^{\widehat{A}_i  {R'}_i} \\
    &= 1 - \sum_{x_i} p(x_i) F\left(\xi_{x_i ,x_i}^{\widehat{A}_i  {R'}_i},\psi_{x_i}^{\widehat{A}_i  {R'}_i}\right)^2.
\end{align*}
%


The optimal entanglement-assisted compression rate for this ensemble source is found in 
\cite{ZK_Eassisted_ISIT_2019} to be $\frac12 (S(A)_{\omega}+S(A|Y)_{\omega})$ where the decodability criterion is block fidelity. This rate is with respect to the following modified source defined as 
\begin{align}\label{modified Schumacher source}
  \omega^{AYR}:=\sum_{x} p(x)   \proj{\psi_x}^A  \otimes \proj{y(x)}^Y \otimes \proj{x}^R, 
\end{align}
where the register $Y$ stores the corresponding orthogonal subspaces for the signals $\{\psi_x^A \}$. For example, if signals $\psi_1^A$ and $\psi_2^A$ are orthogonal to signals $\psi_3^A$ and $\psi_4^A$, the variables $y(1)=y(2)$ and $y(3)=y(4)$ denote two underlying orthogonal subspaces. Since block fidelity implies per-copy
fidelity, the rate $\frac12 (S(A)_{\omega}+S(A|Y)_{\omega})$ is achievable with per-copy fidelity as well. Notice that 
\begin{align}
    \frac12 (S(A)_{\omega}+S(A|Y)_{\omega}) = \frac12 I(A:R)_{\psi'}, \nonumber
\end{align}
where the mutual information is with respect to the modified pure state 
$\ket{\psi'}^{AYR} := \sum_x \sqrt{p(x)} \ket{\psi_x}^A \ox \ket{y(x)}^A \ox \ket{x}^R$.

The converse bound for the above rate considering per-copy fidelity is contained in 
Corollary~\ref{cor:converse for modified Schumacher ensemble} below: 
\[
  Q_{ea}(0) \geq \overline{Q}(0) \geq \frac12 (S(A)_{\omega}+S(A|Y)_{\omega}),
\]
that is to say the same as that found in \cite{ZK_Eassisted_ISIT_2019}, but as 
before we stress that here it holds under the weaker per-copy fidelity. 
Note that the lower bound in Eq.~(\ref{eq:Schumi-lower}) is not valid here since in the last line we
use the fact that the decoded state on systems $\widehat{A} R$ is very close, in trace distance, to the pure source on systems $AR$. However, for the ensemble source, the decoded
state is close to the original ensemble and not the purification of the ensemble, therefore,
the lower bound on Eq.~(\ref{eq:Schumi-lower}) does not hold in this case.

\subsection{Quantum state redistribution}
\label{subsection:QSR}
To recover QSR itself, but with a per-letter fidelity criterion, 
we replace $A$ by the bipartite system $AC$, $\wet{A} = \widehat{C} \simeq C$
and $\wet{B} = \widehat{A}\widehat{B} \simeq AB$. The source is given by the pure state 
$\ket{\psi}^{ACBR}$, where $A$ and $C$ are initially with Alice and $B$ with Bob, 
and at the end $A$ changes hands from Alice to Bob, while $C$ and $B$ 
remain in place. 
The distortion operator is $\Delta = \1-\psi^{\widehat{A}\widehat{C}\widehat{B}R}$,
so that the distortion per letter is $\Delta(\xi) = \tr\xi\Delta = 1-\tr\xi\psi = 1-F(\xi,\psi)^2$. 

Note that for a single system the criterion is the familiar fidelity (up to the 
square, that some authors put and others not), but for block length $n$ the usual 
criterion considered for QSR \cite{Devetak2008_2,Yard2009,Oppenheim2008}
is not the per-copy but the block fidelity, which is a stronger requirement. 
Nevertheless, the well-known coding theorems for QSR \cite{Devetak2008_2,Yard2009,Oppenheim2008}
imply that $\frac12 I(A:R|B)$ is an achievable rate for any distortion $D \geq 0$, since block fidelity implies per-copy fidelity, 
hence $Q_{ea}(D) \leq \frac12 I(A:R|B) = \frac12 I(A:R|C)$ for all $D\geq 0$. 

%
On the other hand, for $D \geq 0$, Theorem~\ref{thm:main} implies the converse bound  $Q_{ea}(D) \geq \overline{Q}(D)$ considering per-copy fidelity. Namely, for $D\geq 0$ and  pairs of CPTP maps $\cE_0$ and $\cD_0$ eligible for 
$\overline{Q}(D)$,  we  obtain
\begin{align}
  \overline{Q}(D) &\geq \inf \frac12 I(Z:R|B)_\varphi           \nonumber\\
        &\geq \inf \frac12 I(\widehat{A}:R|B)_\xi \nonumber\\
        &\geq \frac12 I(A:R|B)_\psi - 2\sqrt{D}\log|R| - g\left(\!\sqrt{D}\right),
         \label{eq:QSR-lower}
\end{align}
where the first line is by definition, the second invoking data processing, 
and the last one by first observing that by Eq.~(\ref{eq:FvdG}), 
$\frac12 \left\|\xi^{\widehat{A} \widehat{C} \widehat{B} R}-\psi^{ACBR} \right\|_1 \leq \sqrt{D}$ 
and then using the Alicki-Fannes 
continuity bound for the conditional entropy \cite{Alicki2004} in the form of Eq.~(\ref{thm:AFW})
given in \cite{Winter2016}.
This lower bound together with the upper bound  discussed above imply that in the limit of $D\rightarrow 0$, 
$Q_{ea}(0) = \overline{Q}(0)$ converges to $\frac12 I(A:R|B)_\psi$. 
As in Subsection \ref{subsec:Schumacher-pure}, we stress that the optimality 
statement yields the same rate as \cite{Devetak2008_2,Yard2009,Oppenheim2008}, 
but under the weaker assumption of per-letter fidelity, rather than block 
fidelity being close to $1$. 

An important special case of QSR is state merging, which is recovered 
for trivial (one-dimensional) side-information system $C$, that is the source is given by the pure state 
$\ket{\psi}^{ABR}$, and $\wet{A} = 1$
and $\wet{B} = \widehat{A} \widehat{B}\simeq AB$ are respectively Alice and Bob's decoded systems. As discussed above, we can conclude that for per-copy fidelity (distortion operator $\Delta = \1-\psi^{\widehat{A}\widehat{B}R}$), 
the optimal rate is $\frac12 I(A:R)_{\psi}$. 


\subsection{Ensemble quantum state redistribution}
\label{sec: ensemble QSR}
Analogous to the discussion of Schumacher's quantum source coding 
(Subsections \ref{subsec:Schumacher-pure} and \ref{subsec:Schumacher}), 
if we have a source ensemble $\left\{p(x),\ket{\psi_x}^{ACBR'}\right\}$, we 
can represent this by the qqqqc-state $\omega^{ACBR'X}=\sum_x p(x) \proj{\psi_x}^{ACBR'} \ox \proj{x}^{X}$. 

However, we can also define the pure state source
$\ket{\psi}^{ACBR} := \sum_x \sqrt{p(x)} \ket{\psi_x}^{ACBR'}\ket{x}^X$ 
and a distortion operator such that measuring the distortion for the pure 
source $\ket{\psi}^{ACBR}$ is equivalent to 
measuring the ensemble infidelity for the source $\omega^{ACBR'X}$. 
%
As before, replace $A$ by the bipartite system $AC$, $\wet{A} = \widehat{C} \simeq C$,
$\wet{B} = \widehat{A}\widehat{B} \simeq AB$, and $R=R'X$.
Then, the output state of the composite system is 
\[\begin{split}
  &\xi^{\widehat{A}^n \widehat{C}^n \widehat{B}^n R^n} \\
  &\phantom{=}
   = (\cD_n \ox \id_{\widehat{C}^nR^n})\!\circ\!(\cE_n \ox \id_{B_0B^nR^n}) \left({(\psi^{ACBR}})^{\ox n} \ox \Phi^{A_0B_0}\right),
\end{split}\]
and the output state of the $i$-th system is 
\begin{align*}
  \xi^{\widehat{A}_i \widehat{C}_i \widehat{B}_i R_i} 
    &= \Tr_{[n]\setminus i} \xi^{\widehat{A}^n \widehat{C}^n \widehat{B}^n R^n} \\
    &=\sum_{x_i,x_i^{'}} \sqrt{p(x_i)p(x_i^{'})} \xi_{x_i ,x_i^{'}}^{\widehat{A}_i \widehat{C}_i \widehat{B}_i R_i^{'}} \ox \ketbra{x_i}{{x'}_i}^{X_i},
\end{align*}
where 
$\xi_{x_i,x_i^{'}}^{\widehat{A}_i \widehat{C}_i \widehat{B}_i R_i^{'}}=\sum_{x^n \setminus x_i}  p(x_{[n] \setminus i}) \xi_{x^n, x'_i x_{[n] \setminus i}}^{\widehat{A}_i \widehat{C}_i \widehat{B}_i R_i^{'}}$.

%
%
Define the distortion operator (we consider the same distortion operator for all copies 
of the source, that is why in the following definition, we drop the index $i$) 
\begin{align}\label{eq:Delta ensemble QSR}
    \Delta = \sum_x \left(\1-\psi_x^{\widehat{A}\widehat{C}\widehat{B}R'}\right) \ox \proj{x}^X,
\end{align}
so that the distortion per letter for the output state $\xi^{\widehat{A}_i \widehat{C}_i \widehat{B}_i R_i}$ is 
\begin{align*}
  D &= \tr \xi^{\widehat{A}_i \widehat{C}_i \widehat{B}_i R_i} \Delta \\
    &= 1 - \sum_{x_i} p(x_i) \tr \xi_{x_i ,x_i}^{\widehat{A}_i \widehat{C}_i \widehat{B}_i {R'}_i}\psi_{x_i}^{\widehat{A}_i \widehat{C}_i \widehat{B}_i {R'}_i} \\
    &= 1 - \sum_{x_i} p(x_i) F\left(\xi_{x_i ,x_i}^{\widehat{A}_i \widehat{C}_i \widehat{B}_i {R'}_i},\psi_{x_i}^{\widehat{A}_i \widehat{C}_i \widehat{B}_i {R'}_i}\right)^2.
\end{align*}
%
%
Again, up to a square this is the average fidelity considered in \cite{ISIT-paper,ZBK_PhD_thesis}, 
and it extends to the average-squared of per-copy  fidelity when the extended 
distortion operator of Eq.~(\ref{eq:Delta ensemble QSR}) is considered. This implies that in the limit of $D\to 0$,
the optimal compression rate of the ensemble source considering per-copy fidelity converges to 
$Q_{ea}(0)$. Therefore, by Theorem~\ref{thm:main} (as well as the results of \cite{ISIT-paper,ZBK_PhD_thesis}) we obtain that  $Q_{ea}(0)=\overline{Q}(0)$.

Now, we define a new single-letter function $K(D)$, 
which then we use to obtain simplified rate lower bounds that are easier to analyze. 

\medskip
\begin{definition}
  \label{def:K_epsilon}
  For 
  $\omega^{ACBR'X}=\sum_x p(x) \proj{\psi_x}^{ACBR'}\otimes \proj{x}^{X}$ a state and $D \geq 0$ define:
  \begin{align*}
    K(D) &:= \sup \frac12 I(W:X|\hat{C})_{\sigma} 
                                \text{ over isometries } \\
                       &\phantom{=====}
                        U: AC \rightarrow Z \hat{C} W \text{ and } 
                        \widetilde{U}:ZB \rightarrow \hat{A}\hat{B}V \text{ s.t.} \\
                       &\phantom{=====}
                        \sum_x p(x) F\left( \psi_x^{ACBR'},\tau_x^{\hat{A} \hat{C} \hat{B} R'}\right)^2  
                        \geq 1- D, 
  \end{align*}
  where 
  \begin{align*}
  \sigma^{Z\hat{C}WBR'X}
     &:= (U\otimes \1_{BR'X})\omega^{ACBR'X} (U\otimes \1_{BR'X})^{\dagger} \\
      & = \sum_x p(x) \proj{\sigma_x}^{Z\hat{C}WBR'}\otimes \proj{x}^{X}, \\
  \tau^{\hat{A}\hat{C}\hat{B}WVR'X}
     &:= (\widetilde{U}\otimes \1_{\hat{C}WR'X}) \sigma^{Z\hat{C}WBR'X}  
         (\widetilde{U}\otimes \1_{\hat{C}WR'X})^{\dagger}\\
       &= \sum_x p(x) \proj{\tau_x}^{\hat{A}\hat{C}\hat{B}WVR'}\otimes \proj{x}^{X}, \\
  \tau^{\hat{A}\hat{C}\hat{B}R'X}
     &:= \Tr_{VW} \tau^{\hat{A}\hat{C}\hat{B}WVR'X},\\
      \tau_x^{\hat{A}\hat{C}\hat{B}R'}
      &:=\Tr_{VW} \proj{\tau_x}^{\hat{A}\hat{C}\hat{B}WVR'}.
  \end{align*}
  Moreover, define $\overline{K}(0):=\lim_{D \to 0+} K(D)$.
\end{definition}

\medskip
\begin{remark}
Definition~\ref{def:K_epsilon} directly implies that $K(0) \leq \overline{K}(0)$ 
because $K(D)$ is a non-decreasing function of $D$. 
Furthermore, $K(0)$ can be strictly positive, for example, for a source 
with trivial system $C$ where $\psi_{x}^{A}\psi_{x'}^{A}=0$ holds for $x\neq x'$ for all $x,x'$, 
we obtain $K(0)=S(X)$.
This follows because Alice can measure her system and obtain the value of $X$ and 
then copy this classical information to the register $W$. 
\end{remark}

\medskip
\begin{lemma}
\label{lemma: lower bound on Q_tilde(0)}
For the source $\ket{\psi}^{ACBR} = \sum_x \sqrt{p(x)} \ket{\psi_x}^{ACBR'}\ket{x}^X$ and the distortion operator of Eq.~(\ref{eq:Delta ensemble QSR}), the rate $\overline{Q}(0)$ is lower bounded as:
\begin{align*}
    \overline{Q}(0) &\!\geq\! \frac{1}{2} \left(S(A|B)_{\psi}+S(A|C)_{\psi} \right) -\overline{K}(0) \\
   &\!=\!\frac{1}{2}I(A:R|B)_{\psi}-\overline{K}(0),
\end{align*}
where the above conditional mutual information is precisely the communication rate 
of QSR for the pure source $\ket{\psi}^{ACBR}$.
Moreover, if system $C$ is trivial, then $\overline{Q}(0)=\frac{1}{2}I(A:R|B)_{\psi}-\overline{K}(0)$.
\end{lemma}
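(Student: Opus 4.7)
The plan is to invoke Theorem~\ref{thm:main} and then deduce the $\overline{K}(0)$-subtraction from a short pure-state entropic identity. Fix any pair of CPTP maps $\cE_0\colon AC \to \widehat{C}Z$ and $\cD_0\colon BZ \to \widehat{A}\widehat{B}$ achieving ensemble distortion $D$, and purify them to the Stinespring isometries $U\colon AC \to Z\widehat{C}W$ and $\widetilde{U}\colon ZB \to \widehat{A}\widehat{B}V$ of Definition~\ref{def:K_epsilon}; the resulting pure states $\sigma$ and $\tau$ satisfy the distortion constraint of $K(D)$, so $\frac{1}{2}I(W:X|\widehat{C})_\sigma \leq K(D)$. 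Theorem~\ref{thm:main} expresses the compression rate as $\frac{1}{2}I(Z:R|B)_\sigma$ with $R=R'X$.

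On any pure four-partite state one has the identity $I(X:Y|Z) = S(Y|Z) + S(Y|W)$ with $W$ the complementary subsystem. Applied to $\sigma^{Z\widehat{C}WBR}$ and $\psi^{ACBR}$ this gives
\[
  I(Z:R|B)_\sigma = S(R|B)_\sigma + S(R|\widehat{C}W)_\sigma,
  \quad
  I(A:R|B)_\psi = S(R|B)_\psi + S(R|C)_\psi.
\]
Since $U$ acts only on $AC$, $S(R|B)_\sigma = S(R|B)_\psi$, so the desired converse reduces to
\[
  S(R|\widehat{C}W)_\sigma + I(W:X|\widehat{C})_\sigma \geq S(R|C)_\psi - o(1),
\]
with $o(1)\to 0$ as $D\to 0$. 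Unfolding $R=R'X$ by the chain rule and expanding $I(W:X|\widehat{C})_\sigma = S(X|\widehat{C})_\sigma - S(X|\widehat{C}W)_\sigma$ collapses this to two subclaims: $S(X|\widehat{C})_\sigma \geq S(X|C)_\psi - o(1)$ and $S(R'|\widehat{C}WX)_\sigma \geq S(R'|CX)_\psi - o(1)$.

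Both subclaims rest on the ensemble fidelity bound $\sum_x p(x) F^2(\psi_x,\tau_x)\geq 1-D$. The first follows because $\sigma^{\widehat{C}X}=\tau^{\widehat{C}X}$ lies within $O(\sqrt{D})$ trace distance of $\sum_x p(x)\psi_x^C \otimes \proj{x}^X$ by Fuchs--van de Graaf~(\ref{eq:FvdG}), so Alicki--Fannes--Winter~(\ref{thm:AFW}) closes the entropy gap. The second I handle per branch via Uhlmann's theorem: since $\ket{\psi_x}^{\widehat{A}\widehat{C}\widehat{B}R'}$ is pure, matching the reduced state $\tau_x^{\widehat{A}\widehat{C}\widehat{B}R'}\approx \psi_x^{\widehat{A}\widehat{C}\widehat{B}R'}$ forces $\ket{\tau_x} \approx \ket{\psi_x}\ket{e_x}^{WV}$, hence $\sigma_x^{\widehat{C}WR'}$ approximates the product $\psi_x^{\widehat{C}R'}\otimes e_x^W$. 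In particular $W$ decouples from $R'$ given $\widehat{C}$ in each branch, so $S(R'|\widehat{C}W)_{\sigma_x}\approx S(R'|\widehat{C})_{\sigma_x}=S(R'|C)_{\psi_x}$; averaging over $x$ and controlling the corrections via AFW and Jensen's inequality aggregates the per-branch $O(\sqrt{D_x})$ errors into an overall $O(\sqrt{D}\log\dim)$ term that vanishes with $D$.

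Taking the infimum over eligible $(\cE_0,\cD_0)$ and using $\frac{1}{2}I(W:X|\widehat{C})_\sigma \leq K(D)$ gives $\overline{Q}(D) \geq \frac{1}{2}I(A:R|B)_\psi - K(D) - o(1)$; sending $D\to 0^+$ yields the lower bound $\overline{Q}(0) \geq \frac{1}{2}I(A:R|B)_\psi - \overline{K}(0)$, and the pure-state identity $I(A:R|B)_\psi = S(A|B)_\psi + S(A|C)_\psi$ supplies the alternative form. When $C$ is trivial, the matching achievability comes from the explicit code of \cite{ISIT-paper,ZBK_PhD_thesis}: Alice locally extracts a near-optimal $W$ from the supremum defining $\overline{K}(0)$ and discards it, saving exactly $\overline{K}(0)$ qubits over bare QSR on $\psi^{ABR}$. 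The hard part will be the quantitative Uhlmann step: the per-branch approximations must be aggregated through AFW and Jensen without the corrections blowing up with the size of the ensemble, and the dimension factors (in particular $\log|R'|$) must be tracked against the $\log\dim$ prefactor in~(\ref{thm:AFW}) to make sure the combined error is genuinely $o(1)$ as $D\to 0$.
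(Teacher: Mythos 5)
Your organizing idea is appealing: write the QSR rate via the pure four-partite identity $I(Z:R|B) = S(R|B) + S(R|\widehat{C}W)$, note $S(R|B)$ is invariant under the encoding isometry, and reduce everything to two conditional-entropy estimates that follow from the distortion constraint plus a per-branch decoupling. That high-level structure is the same as the paper's (the identity is exactly $\frac12 I(A:R|B)_\psi = \frac12(S(A|B)_\psi + S(A|C)_\psi)$ stated in the Lemma, and the per-branch decoupling via a Schmidt/largest-eigenvalue argument is exactly Appendix B), so the ingredients are right. The gap is in the bookkeeping of which state is pure and which is classical-quantum in $X$, and it is not cosmetic.

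Concretely: you apply the pure four-partite identity to ``$\sigma^{Z\widehat{C}WBR}$'' and simultaneously invoke $\frac12 I(W:X|\widehat{C})_\sigma \leq K(D)$. But these cannot both refer to the same state. $K(D)$ in Definition~\ref{def:K_epsilon} is defined on the cq state $\sigma = (U\otimes\1)\omega(U\otimes\1)^\dagger$ with $\omega^{ACBR'X} = \sum_x p(x)\proj{\psi_x}\otimes\proj{x}^X$, which is mixed; the four-partite identity needs a pure state. The state that is pure is $\varphi = (U_{\cE_0}\otimes\1)\psi$ with the coherent source $\ket{\psi}^{ACBR'X}$, but $I(W:X|\widehat{C})_\varphi$ and $I(W:X|\widehat{C})_\sigma$ differ (they are related by dephasing $X$, under which this CMI is not monotone in a useful direction). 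Likewise your per-branch aggregation $S(R'|\widehat{C}WX)_\sigma = \sum_x p(x) S(R'|\widehat{C}W)_{\sigma_x}$ requires $\sigma$ to be cq in $X$, which the pure $\varphi$ is not. The paper resolves this by introducing the purifying register $X'$, working with the pure $\ket{\sigma}^{Z\widehat{C}WBR'XX'}$, and using the identity $I(Z:R|B)_\varphi = I(Z:R'XX'|B)_\sigma$; all subsequent Fannes/Alicki--Fannes steps and the decoupling estimate are then consistently on $\sigma$. Your proof needs this $X'$ (and the resulting extra chain-rule terms, such as the $I(W:R'|\widehat{C}X)_\sigma$ term that Appendix~B shows vanishes) to go through.

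Finally, for the trivial-$C$ equality you appeal to achievability from the ISIT protocol. That would give an upper bound in terms of $K(0)$ (the value at $D=0$), not $\overline{K}(0) = \lim_{D\to 0^+}K(D)$; since the whole point of $\overline{Q}$ and $\overline{K}$ is that continuity at $D=0$ is not guaranteed, this does not close the gap with the lower bound $\overline{Q}(0)\geq \frac12 I(A:R|B)_\psi - \overline{K}(0)$. The paper instead proves a matching converse-side \emph{upper} bound on $\frac12 I(Z:R'XX'|B)_\sigma$ for each $D$ (using purity when $C$ is trivial, Eqs.~(\ref{eq:c trivial 1})--(\ref{eq:c trivial 2})), and takes $D\to 0$ on both sides.
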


\medskip
The slightly lengthy proof of this lemma is found in Appendix A.
We use it to simplify the rate expressions in important special cases. 

\medskip
\begin{definition}[{Barnum~\emph{et~al.}~\cite{Barnum2001_2}}]
\label{def:reducibility  QSR ensemble}
  An ensemble of pure states 
  $\cE=\{p(x),\proj{\psi_x}^{ACBR'} \}_{x\in \mathcal{X}}$ 
  is called \emph{reducible} if its states fall into two or more orthogonal subspaces.
  Otherwise the ensemble $\cE$ is called \emph{irreducible}.
  We apply the same terminology to the source state $\omega^{ACBR'X}$.
\end{definition}

\medskip
\begin{proposition}
\label{cor:irreducible}
For an irreducible source, $\overline{K}(0)=K(0)=0$. Hence, the 
optimal compression rate considering per-copy fidelity is
\begin{align*}
  \overline{Q}(0)=\frac{1}{2}I(A:R'XX'|B)_{\omega}=\frac{1}{2}I(A:R|B)_{\psi}.
\end{align*}
\end{proposition}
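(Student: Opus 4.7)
The plan is to sandwich $\overline{Q}(0)$ between two matching bounds. The upper bound $\overline{Q}(0) \leq \tfrac{1}{2} I(A:R|B)_\psi$ comes from running QSR on the purified source $\ket{\psi}^{ACBR}$ at rate $\tfrac{1}{2} I(A:R|B)_\psi$: as observed earlier in this subsection, block fidelity for QSR implies per-copy fidelity for the distortion operator of Eq.~(\ref{eq:Delta ensemble QSR}). The matching lower bound $\overline{Q}(0) \geq \tfrac{1}{2} I(A:R|B)_\psi - \overline{K}(0)$ is exactly Lemma~\ref{lemma: lower bound on Q_tilde(0)}. The whole proposition therefore reduces to proving $\overline{K}(0) = 0$ for irreducible sources; once this is established, $0 \leq K(0) \leq \overline{K}(0)$ automatically also gives $K(0) = 0$.

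First I would handle the exact case $D = 0$. The constraint $\sum_x p(x) F^2(\psi_x,\tau_x) \geq 1$ forces $\tau_x^{\widehat{A}\widehat{C}\widehat{B}R'} = \psi_x^{ACBR'}$ for every $x$. The CPTP map $\mathcal{N} := \Tr_{VW} \circ \widetilde{U} \circ U$ from $ACB$ to $\widehat{A}\widehat{C}\widehat{B}$ thus preserves every element of the ensemble $\{p(x),\psi_x^{ACBR'}\}_x$. For an irreducible ensemble, the Koashi--Imoto structure theorem (in the form used by Barnum--Koashi--Winter \cite{Barnum2001_2}) forces any such ensemble-preserving operation to act as a unitary on $ACB$; its Stinespring dilation $\widetilde{U}\circ U$ must therefore factor as that unitary tensored with a fixed state on the ancilla $WV$. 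Consequently $\sigma^{WX}$ is a product and $I(W{:}X|\widehat{C})_\sigma = 0$, so taking the supremum gives $K(0) = 0$.

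The hard part is upgrading this to $\overline{K}(0)=0$, which requires a continuity statement in $D$. The difficulty is that the auxiliary register $W$ is \emph{unbounded}, so the family of isometries admissible in $K(D)$ is not compact and a naive Fannes-type estimate cannot be applied. I would overcome this by first reducing to bounded $W$: since $I(W{:}X|\widehat{C}) \leq S(X) \leq \log|X|$ is uniformly bounded and depends only on the classical-quantum ensemble $\{p(x),\sigma_x^{W\widehat{C}}\}$, one may replace $W$ by the output of an information-preserving measurement (or a Koashi--Imoto-style dimension reduction), producing a register of dimension at most a polynomial in $|X|$ and $|\widehat{C}|$. This renders the optimisation compact without changing $K(D)$.

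With compactness in hand, a robust (fidelity-stable) version of the Koashi--Imoto structure theorem applied to the irreducible ensemble implies that if $\mathcal{N}$ preserves every $\psi_x$ up to average squared fidelity $1-D$, then the marginal $\sigma_x^{W}$ must be within trace distance $O(\sqrt{D})$ of a fixed state, uniformly in $x$. Combining this with the Alicki--Fannes--Winter continuity bound of Eq.~(\ref{thm:AFW}), applied to $I(W{:}X|\widehat{C}) = S(W|\widehat{C}) - S(W|\widehat{C}X)$ on the truncated register, yields $K(D) \leq f(D)$ with $f(D)\to 0$ as $D\to 0^+$. Hence $\overline{K}(0) = \lim_{D\to 0^+} K(D) = 0$, which together with the two bounds in the first paragraph completes the proof.
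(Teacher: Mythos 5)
Your overall sandwich (QSR upper bound $\tfrac12 I(A{:}R|B)_\psi$ on $\overline{Q}(0)$, Lemma~\ref{lemma: lower bound on Q_tilde(0)} giving the matching lower bound modulo $\overline{K}(0)$, hence reduce to $\overline{K}(0)=0$) is the same as the paper's, and the $D=0$ case via the BHJW/Koashi--Imoto discussion for irreducible ensembles is essentially the paper's argument too. The difficulty is in going from $K(0)=0$ to $\overline{K}(0)=\lim_{D\to 0^+}K(D)=0$, and this is where your proposal has a genuine gap.

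First, your dimension reduction for $W$ is not justified. Bounded conditional mutual information $I(W{:}X|\widehat{C})\le \log|X|$ does \emph{not} imply that $W$ can be truncated to bounded dimension without loss (this is the same obstruction that makes, e.g., squashed entanglement hard to compute), and there is no ``information-preserving measurement'' producing a register of polynomial dimension here. Second, the ``robust (fidelity-stable) Koashi--Imoto theorem'' you invoke, asserting that average fidelity $1-D$ forces the $\sigma_x^W$ to be within $O(\sqrt{D})$ of a fixed state \emph{uniformly in $x$}, is not a standard result and would itself require a substantial proof.

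The paper sidesteps both issues with a simpler device: instead of bounding $|W|$, it upper-bounds $I(W{:}X|\widehat{C})_\tau \le I(WV{:}X|\widehat{C})_\tau \le I(E{:}X|\widehat{C})_\nu$, where $E$ is the environment of a \emph{single} isometry $U:ACB\hookrightarrow \widehat{A}\widehat{C}\widehat{B}E$ with the same fidelity constraint. The point is that this quantity depends only on the CPTP map $\mathcal{N}:ACB\to\widehat{A}\widehat{C}\widehat{B}$, whose minimal Stinespring environment has dimension at most $|ACB|\cdot|\widehat{A}\widehat{C}\widehat{B}|$; taking $|E|$ equal to this Choi-rank bound makes the domain compact. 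Then continuity of the fidelity and the CMI, together with ordinary compactness (any maximizing sequence at $D_n\to 0$ has a convergent subsequence whose limit is feasible at $D=0$), gives $\lim_{D\to 0^+}\sup I(E{:}X|\widehat{C}) = \sup_{D=0} I(E{:}X|\widehat{C}) = 0$ directly, with no quantitative continuity estimate and no robust structure theorem needed. You should replace your two problematic steps with this single-isometry reformulation and the resulting compactness argument.
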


\begin{proof}
Consider the following mutual information
\begin{align*}
  \sup I(E:X|\hat{C})_{\nu} &
                  \text{ over isometries } 
                  U: ACB \rightarrow \hat{A} \hat{C} \hat{B}  E  \text{ s.t.} \\ &\sum_x p(x) F( \psi_x^{ACBR'},\nu_x^{\hat{A} \hat{C} \hat{B} R'})^2  \geq 1- D,
\end{align*}
where the state $\nu^{\hat{A} \hat{C} \hat{B} R'X}$ is the output state after applying the isometry $U$ on the input systems. In fact the isometries and the environments in Definition~\ref{def:K_epsilon} are respectively special cases of the above isometry and the environment $E$ in the above optimization. Therefore, the mutual information of Definition~\ref{def:K_epsilon} is bounded as
\begin{align}\label{eq:E vs VW}
    I(W:X|\widehat{C})_{\tau} \leq I(WV:X|\widehat{C})_{\tau} 
    \leq I(E:X|\hat{C})_{\nu} . 
\end{align}
Furthermore, for $D=0$ we obtain
\begin{align*}
  I(E:X|\hat{C})_{\nu} &\leq  I(E:X\hat{C})_{\nu} \\
                       &=     I(E:X)_{\nu}+I(E:\hat{C}|X)_{\nu}=I(E:X)_{\nu},
\end{align*}
where the last equality follows because for $D=0$ the environment $E$ and decoded system $\hat{C}$ are decoupled given $X$ (see Appendix B).
For irreducible sources 
the mutual information $I(E:X)_{\nu}$ is zero which follows from the detailed 
discussion of \cite[p.~2028]{Barnum2001_2}. 
In the limit $D \to 0$, the value of the optimization converges to  
its value at $D=0$ which follows from the fact that  
the fidelity  and the conditional mutual information are  continuous functions of
CPTP maps, and the domain of the  optimization is a compact set. 
Therefore, from Eq.~(\ref{eq:E vs VW}) we conclude that $I(W:X|\widehat{C})_{\tau}=0$.

The above proves $Q'(0) \geq \overline{Q}(0) \geq \frac{1}{2}I(A:R'XX'|B)_{\omega}$.
Also, by definition we have $Q'(0) \leq  \frac{1}{2}I(A:R'XX'|B)_{\omega}$. Therefore, $Q'(0) =  \overline{Q}(0) = \frac{1}{2}I(A:R'XX'|B)_{\omega}$. 
\end{proof}

\medskip
\begin{corollary}
\label{cor:converse for modified Schumacher ensemble}
The compression rate of the modified source defined in Eq.~(\ref{modified Schumacher source}) is bounded as follows
\begin{align*}
    \overline{Q}(0) &\!\geq\! \frac{1}{2} \left(S(A)_{\omega}+S(A|Y)_{\omega} \right).
\end{align*}
\end{corollary}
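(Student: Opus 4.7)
\medskip
\noindent\textbf{Plan.}
The plan is to apply the ``Moreover'' clause of Lemma~\ref{lemma: lower bound on Q_tilde(0)} to the modified source and then upper-bound the auxiliary quantity $\overline{K}(0)$ by exploiting the orthogonal decomposition of the source encoded in the register $Y$.

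First, I would view the modified ensemble source $\omega^{AYR}$ of Eq.~(\ref{modified Schumacher source}) as an instance of the ensemble framework of Section~\ref{sec: ensemble QSR}, with $B$ and $C$ trivial, the auxiliary reference component $R'$ of Definition~\ref{def:K_epsilon} identified with $Y$, and the classical label register $X$ identified with $R$. The corresponding pure source is $\ket{\psi'}^{AYR}=\sum_x\sqrt{p(x)}\ket{\psi_x}^A\ket{y(x)}^Y\ket{x}^R$. Since this state is pure and the whole of $YR$ plays the role of the reference, $I(A:YR)_{\psi'}=2S(A)_{\psi'}=2S(A)_{\omega}$. The ``Moreover'' clause of Lemma~\ref{lemma: lower bound on Q_tilde(0)}, applicable because $C$ is trivial, then gives the identity
\[
  \overline{Q}(0) \;=\; \frac{1}{2}\,I(A:YR)_{\psi'} - \overline{K}(0) \;=\; S(A)_{\omega} - \overline{K}(0).
\]

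The main step is then to prove the upper bound $\overline{K}(0)\leq \frac{1}{2}S(Y)_{\omega}$. For any isometries $U,\widetilde{U}$ eligible for $K(D)$, noting that $\widehat{C}$ is trivial and $Y$ is a classical register on the reference side, the chain rule gives
\[
  I(W:X|\widehat{C})_{\sigma} \;=\; I(W:Y)_{\sigma} + I(W:X|Y)_{\sigma} \;\leq\; S(Y)_{\omega} + I(W:X|Y)_{\sigma}.
\]
Conditioning on $Y=y$, the source reduces to the sub-ensemble $\{p(x)/q(y):y(x)=y\}$, which is \emph{irreducible} by the very construction of $Y$. Applying the decoupling argument used in the proof of Proposition~\ref{cor:irreducible}---that for an irreducible ensemble the environment of any perfectly reconstructing isometry is decoupled from the classical label~\cite{Barnum2001_2}, combined with compactness and continuity of the relevant entropic quantities---shows that $I(W:X|Y=y)_{\sigma}\to 0$ for each $y$ as $D\to 0+$, and hence $I(W:X|Y)_{\sigma}\to 0$. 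Taking the supremum over eligible isometries and then the limit $D\to 0+$ yields $\overline{K}(0)\leq \frac{1}{2}S(Y)_{\omega}$.

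Combining the two bounds and using $S(A|Y)_{\omega}=S(A)_{\omega}-S(Y)_{\omega}$---which holds because $Y$ is a classical function of the orthogonal subspace containing the $A$-marginal, so that $S(AY)_{\omega}=S(A)_{\omega}$---we conclude
\[
  \overline{Q}(0) \;\geq\; S(A)_{\omega} - \frac{1}{2}S(Y)_{\omega} \;=\; \frac{1}{2}\bigl(S(A)_{\omega}+S(A|Y)_{\omega}\bigr).
\]
The main obstacle I anticipate is the second step: the decoupling statement for a single irreducible ensemble at $D=0$ is clean, but here the vanishing of $I(W:X|Y=y)_{\sigma}$ must be established in the limit $D\to 0+$ uniformly across the finitely many sub-ensembles, while the auxiliary register $Z$ in the definition of $K(D)$ is a priori unbounded. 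A compactness-and-continuity argument on each orthogonal sub-ensemble separately, analogous to the one sketched at the end of the proof of Proposition~\ref{cor:irreducible}, should handle this.
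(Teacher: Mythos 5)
Your proof is correct and follows essentially the same strategy as the paper's: invoke the ``Moreover'' clause of Lemma~\ref{lemma: lower bound on Q_tilde(0)} with $C,B$ trivial to write $\overline{Q}(0)=S(A)_\omega-\overline{K}(0)$, bound $\overline{K}(0)\leq\frac12 S(Y)_\omega$ by splitting off $Y$ and applying Proposition~\ref{cor:irreducible} to the $Y$-conditioned sub-ensembles (which are irreducible by construction of $Y$), and close with $S(AY)_\omega=S(A)_\omega$. The one place where you deviate slightly is the chain-rule step: the paper writes $I(W:X)_\sigma=I(WY:X)_\sigma$, justified by the remark that ``$Y$ can be copied to the environment system'', i.e.\ a without-loss-of-generality modification of the encoding isometry so that $W$ also carries a copy of $Y$; you instead use $I(W:X)_\sigma=I(W:YX)_\sigma=I(W:Y)_\sigma+I(W:X|Y)_\sigma$, which is a pure entropy identity resting only on $Y$ being a deterministic function of $X$ (so $I(W:Y|X)=0$), then bound $I(W:Y)\leq S(Y)$. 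Your version is a little cleaner, since it avoids the isometry-modification argument. Two small presentational points: when you write ``the chain rule gives $I(W:X|\widehat C)_\sigma=I(W:Y)_\sigma+I(W:X|Y)_\sigma$'' you should spell out the intermediate step $I(W:X)=I(W:XY)$ before the chain rule, as it is not a chain rule by itself; and the worry you flag at the end about the unbounded register $Z$ is in fact already taken care of inside Proposition~\ref{cor:irreducible}, which relaxes the optimisation to a single isometry $U:ACB\to\widehat A\widehat C\widehat B E$ with a dimension-bounded environment $E$ before invoking compactness, so nothing extra is needed per sub-ensemble.
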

\begin{proof}
By Lemma~\ref{lemma: lower bound on Q_tilde(0)}, the first inequality below holds:
\begin{align}
    \overline{Q}(0) &\geq S(A)_{\omega} - \frac12 I(W:X)_\sigma \nonumber \\
    &=  \frac12 (S(A)_{\omega}+S(AY)_{\omega}) - \frac12 I(W:X)_\sigma \nonumber \\
    &=  \frac12 (S(A)_{\omega}+S(AY)_{\omega}) - \frac12 I(WY:X)_\sigma \nonumber \\
    &=  \frac12 (S(A)_{\omega}+S(AY)_{\omega}) - \frac12 I(Y:X)_\omega- \frac12 I(W:X|Y)_\sigma \nonumber \\
    &=  \frac12 (S(A)_{\omega}+S(AY)_{\omega}) - \frac12 I(Y:X)_\omega \nonumber \\
    &=  \frac12 (S(A)_{\omega}+S(A|Y)_{\omega}), \nonumber
\end{align}
where the second line follows because the information of the orthogonal
subspaces can obtained by an isometry on system $A$. The third line holds since $Y$ can be copied to the environment system. The penultimate line follows
from Proposition~\ref{cor:irreducible} because conditioned on $Y$, the source is irreducible.
The last line follows because $S(Y|X)=0$.
\end{proof}

\medskip
\begin{definition}
An ensemble of pure states $\cE=\{p(x),\proj{\psi_x}^{ACBR'} \}_{x\in \mathcal{X}}$
is called a \emph{generic source} if there is at least one $x$ 
for which the reduced state $\psi_x^{ACB}= \Tr_{R'} \ketbra{\psi_x}{\psi_x}^{ACBR'}$ 
has full support on $ACB$. 
\end{definition}

\medskip
\begin{proposition}
\label{prop:generic sources}
For generic sources, $\overline{K}(0)=K(0)=0$. Hence, the 
optimal compression rate considering per-copy fidelity is
\begin{align*}
  \overline{Q}(0)=\frac{1}{2}I(A:R'XX'|B)_{\omega}=\frac{1}{2}I(A:R|B)_{\psi}.
\end{align*}
\end{proposition}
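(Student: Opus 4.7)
The plan is to follow, almost verbatim, the strategy of the proof of Proposition~\ref{cor:irreducible}. The upper bound $\overline{Q}(0)\le\frac12 I(A{:}R|B)_\psi$ is already inherited from quantum state redistribution, since Section~\ref{subsection:QSR} gives $Q_{ea}(D)\le\frac12 I(A{:}R|B)_\psi$ for every $D\ge 0$, and Theorem~\ref{thm:main} identifies $Q_{ea}(0)=\overline{Q}(0)$. By Lemma~\ref{lemma: lower bound on Q_tilde(0)}, the matching lower bound then reduces to showing $\overline{K}(0)=K(0)=0$ for generic sources. As in Proposition~\ref{cor:irreducible}, I would first enlarge the optimisation to a single master isometry $U:ACB\to\widehat{A}\widehat{C}\widehat{B}E$ subject to the same per-copy fidelity constraint, so that $K(D)\le\sup I(E{:}X|\widehat{C})_\nu$, and then it is enough to control this latter quantity.

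The heart of the argument is what happens at $D=0$. The fidelity condition together with purity forces, for each $x$, that $(U\otimes\1_{R'})\ket{\psi_x}^{ACBR'}=\ket{\psi_x}^{\widehat{A}\widehat{C}\widehat{B}R'}\otimes\ket{\theta_x}^E$ for some pure state $\ket{\theta_x}$. Now I would invoke the generic hypothesis: pick $x_0$ with $\psi_{x_0}^{ACB}$ of full support and Schmidt-decompose $\ket{\psi_{x_0}}^{ACBR'}=\sum_j\sqrt{\lambda_j}\ket{j}^{ACB}\ket{j}^{R'}$ with every $\lambda_j>0$ and $\{\ket{j}^{ACB}\}$ a full basis of $ACB$. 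Matching coefficients in the $x=x_0$ identity yields $U\ket{j}^{ACB}=\ket{j}^{\widehat{A}\widehat{C}\widehat{B}}\otimes\ket{\theta_{x_0}}^E$ for every $j$ (under the natural identification $ACB\simeq\widehat{A}\widehat{C}\widehat{B}$), so $U$ must have the product form $U=V_0\otimes\ket{\theta_{x_0}}^E$ with $V_0:ACB\to\widehat{A}\widehat{C}\widehat{B}$ an isometry. Plugging this back in for arbitrary $x$ then forces $\ket{\theta_x}=\ket{\theta_{x_0}}$ for every $x$, so that $\nu^{EX}=\proj{\theta_{x_0}}^E\otimes\omega^X$ is a product state and $I(E{:}X)_\nu=0$. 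Combined with the decoupling $I(E{:}\widehat{C}|X)_\nu=0$ at $D=0$ already used in Proposition~\ref{cor:irreducible} (see Appendix~B), the chain rule gives $I(E{:}X|\widehat{C})_\nu\le I(E{:}X\widehat{C})_\nu=I(E{:}X)_\nu+I(E{:}\widehat{C}|X)_\nu=0$, as required.

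Finally, transferring this vanishing at $D=0$ to $\overline{K}(0)=\lim_{D\to 0+}K(D)$ proceeds by exactly the compactness argument at the end of Proposition~\ref{cor:irreducible}: a Stinespring-type dimension bound on $E$ makes the domain of optimisation compact, and then joint continuity of the per-copy fidelity and of the conditional mutual information on this domain yields continuity of the supremum from the right at $D=0$. I expect this last continuity step to be the main technical obstacle, since one must justify that capping the dimension of $E$ does not shrink the supremum and that near-optimisers for $D>0$ have a convergent subsequence whose limit satisfies the $D=0$ constraint. The genuinely new ingredient compared to Proposition~\ref{cor:irreducible} is only the Schmidt-rank argument above, which replaces the Barnum~\emph{et~al.} irreducibility reason for $I(E{:}X)_\nu=0$ at $D=0$.
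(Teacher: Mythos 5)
Your proposal is correct but takes a genuinely different route from the paper's Appendix~C proof. The paper does \emph{not} pass through $K(0)=0$ at all: it works directly with the Stinespring dilations $U_{\cE_0}:AC\hookrightarrow Z\widehat{C}W$, $U_{\cD_0}:ZB\hookrightarrow\widehat{A}\widehat{B}V$, derives the trace-distance bound $\sum_x p(x)\|\tau_x^{\widehat{A}\widehat{C}\widehat{B}WVR'}-\psi_x^{ACBR'}\ox\tau_x^{WV}\|_1\leq 2(\sqrt{6D}+\sqrt{2D})$ from the near-purity of the $\widehat{A}\widehat{C}\widehat{B}R':WV$ cut, and then exploits genericity through the operator $T_x$ on $R'$ satisfying $\ket{\psi_x}=(\1\ox T_x)\ket{\psi_0}$ with $\|T_x\|\leq\lambda_0^{-1/2}$. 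Because the isometries act only on $ACB$, they commute with $T_x$, which lets the paper replace $\tau_x^{WV}$ by the fixed state $\tau_0^{WV}$ at the cost of a factor $\lambda_0^{-1/2}$; an Alicki--Fannes step then gives the explicit bound $I(WV:\widehat{C}X)_\tau\leq 2\delta_D\log(|C||X|)+h(\delta_D)+2\sqrt{2D}\log(|C||X|)+h(\sqrt{2D})$ with $\delta_D=(\sqrt{6D}+\sqrt{2D})/\sqrt{\lambda_0}$, so continuity at $D=0$ is automatic and no compactness argument is needed.

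Your route instead establishes the exact $D=0$ structure first: the Schmidt-rank argument (full Schmidt rank of $\ket{\psi_{x_0}}$ on the $ACB:R'$ cut forces $U=V_0\ox\ket{\theta_{x_0}}^E$, hence $\ket{\theta_x}=\ket{\theta_{x_0}}$ for all $x$) is a neat and correct replacement for the Barnum~\emph{et~al.}\ irreducibility reasoning used in Proposition~\ref{cor:irreducible}, and in fact it makes the decoupling $I(E:\widehat{C}|X)=0$ trivial since $E$ sits in a fixed pure state. The step you rightly flag as delicate --- transferring $K(0)=0$ to $\overline{K}(0)=0$ via compactness --- does go through: once you pass to the master isometry $U:ACB\to\widehat{A}\widehat{C}\widehat{B}E$ as in Eq.~(\ref{eq:E vs VW}), the environment $E$ can be dimension-bounded (e.g.\ $|E|\leq|ACB|\cdot|\widehat{A}\widehat{C}\widehat{B}|$ by the usual Stinespring/rank argument), making the domain compact, and the constraint set shrinks monotonically as $D\to 0$, so the maximum of the continuous objective converges to its value on the intersection, which is the $D=0$ set. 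This is exactly the argument the paper already accepts in Proposition~\ref{cor:irreducible}. In short: your proof buys a cleaner structural picture of what happens at zero error, while the paper's $T_x$-trick buys an explicit, quantitative rate of convergence in $D$ and sidesteps compactness entirely.
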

We give the proof of this proposition in Appendix C. 
We note that in the case treated in this proposition, continuity of $Q_{ea}(D)$ 
and $Q(D)$ is guaranteed: it is anyway given at all $D>0$, and the above
result shows it holds also at $D=0$.

\section{Discussion}
\label{sec:discussion}
We consider an entanglement-assisted rate-distortion problem with side information systems at the encoder and decoder side where the distortion measure is a general convex and continuous function of source states.
We show that the optimal rate-distortion function is equal to the single-letter function
$Q'(D)$ for $D>D_0$ and $\lim_{D\to D_0}Q'(D)$ for $D=D_0$, where $D_0$ is minimal distortion.
Furthermore, we show that this is a convex and continuous function for $D>D_0$. Despite being  single letter, computing $Q'(D)$ potentially involves unbounded optimisation since a priori there is not a dimension bound on system $Z$. Therefore, we cannot apply compactness arguments to show that it is continuous at $D=D_0$.

We subsequently appply this general theory with specific distortion operators to study various source coding problems with per-copy fidelity criteria. We consider both pure and ensemble source models of Schumacher's compression and quantum state redistribution, and argue that we can always define quantum sources as pure states
and adjust the distortion operator accordingly to impose entanglement fidelity or ensemble fidelity as the decodability criterion. 
Therefore, we derive the optimal entanglement-assisted compression rates for Schumacher and QSR
sources with entanglement and ensemble fidelity.
For both Schumacher models and also pure QSR these rates are equal to the rates considering block fidelity. The ensemble QSR with block fidelity is studied in \cite{ISIT-paper, ZBK_PhD_thesis} where the converse is equal to $\overline{Q}(0)=\lim_{D\to D_0}Q'(D)$.
The rate $Q'(0)$ is shown to be achievable, and it  would only match with the converse 
if the function $Q'(D)$ is continuous at $D=0$.

To analyse the distortion measure for vanishing $D$, we find a lower bound on $\overline{Q}(0)$ in terms 
of the limit of another function at $D=0$, i.e. $\overline{K}(0)$. Despite the fact that
computing $\overline{K}(0)$ might involve unbounded optimization as well, it is sometimes
easier to analyse. In particular, we show that $\overline{K}(0)=0$ for irreducible and generic sources. This implies that for these sources both ensemble and entanglement fidelity lead to the same compression rate, i.e. the rate of pure QSR source.

Finally, recall that in our definition of the rate-distortion task we have assumed 
that the encoder and decoder share free entanglement. This was motivated so as 
to make a smoother connection to QSR.
However, it is not known whether the pre-shared entanglement is always necessary to achieve 
the corresponding quantum rates. There are certainly cases where QSR does not require prior 
entanglement, such as when Alice's side information $C$ is trivial, which would carry
over to our setting whenever $\overline{K}(0)=K(0)=0$, for instance for an irreducible
ensemble.
More generally, in future work we plan to consider the trade-off between the quantum 
rates and the entanglement rate.


\appendix

\section*{A. Proof of Lemma\ref{lemma: lower bound on Q_tilde(0)}}
For the pure source $\ket{\psi}^{ACBR}= \sum_x \sqrt{p(x)} \ket{\psi_x}^{ACBR'}\ket{x}^X$ and the distortion operator of Eq.~(\ref{eq:Delta ensemble QSR}), let $\cE_0$ and $\cD_0$ be the CPTP maps realizing the
infimum of $\frac12 I(Z:R|B)_{\varphi}$, in the definition of Eq.~(\ref{eq:rd-function-almost}). Moreover, let $U_{\cE_0}: AC \hookrightarrow Z \widehat{C} W$ and $U_{\cD_0}: ZB \hookrightarrow  \widehat{A} \widehat{B} V$ denote respectively the Stinespring isometries of $\cE_0$ and $\cD_0$. Then, the states after
applying the isometries are
\begin{align}
\ket{\varphi}^{Z\widehat{C} B W R}
  &= (U_{\cE_0} \ox \1_{BR}) \ket{\psi}^{ACBR},   \nonumber \\
\ket{\xi}^{\widehat{A} \widehat{C} \widehat{B}  WV R}
  &= (U_{\cD_0} \ox \1_{\widehat{C} W R}) \ket{\varphi}^{Z\widehat{C} B W R}. \nonumber
\end{align}
Now, let $\ket{\omega}^{ACBR'XX'}= \sum_x \sqrt{p(x)} \ket{\psi_x}^{ACBR'}\otimes \ket{x}^{X} \otimes \ket{x}^{X'}$ be the purification of the state $\omega^{ACBR'X}$ in Definition~\ref{def:K_epsilon} and define the following states:
\begin{align}\label{sigma tau pure states}
\ket{\sigma}^{Z\widehat{C} B W R'XX'}
  &:= (U_{\cE_0} \ox \1_{BR'XX'}) \ket{\omega}^{ACBR'XX'},   \nonumber \\
\ket{\tau}^{\widehat{A} \widehat{C} \widehat{B}  WV R'XX'}
  &:= (U_{\cD_0} \ox \1_{\widehat{C} W R'XX'}) \ket{\sigma}^{Z\widehat{C} B W R'XX'}. 
\end{align}
Notice that $\frac12 I(Z:R|B)_{\varphi}= \frac12 I(Z:R'XX'|B)_{\sigma}$. In what follows, we
establish lower bounds on $I(Z:R'XX'|B)_{\sigma}$. Namely, 
\begin{align}
 \label{eq:1}
 I&(Z:R'XX'|B)_{\sigma} \nonumber\\
  &= S(ZB)_{\sigma}-S(B)_{\sigma}-S(ZBR'XX')_{\sigma}+S(BR'XX')_{\sigma} \nonumber \\ 
  &= S(\widehat{C} W R'XX')_{\sigma}-S(B)_{\sigma}-S(\widehat{C} W)_{\sigma}+S(AC)_{\sigma} \nonumber \\
  &= S(AB)_{\omega}-S(AB)_{\omega}+S(C)_{\omega}-S(C)_{\omega} \nonumber\\
  &\phantom{==} 
     +S(\widehat{C} W R'XX')_{\sigma}-S(B)_{\sigma}-S(\widehat{C} W)_{\sigma}+S(AC)_{\sigma} \nonumber \\
  &= S(A|B)_{\omega}+S(A|C)_{\omega}-S(AB)_{\omega}+S(C)_{\omega} \nonumber\\
  &\phantom{==}
     +S(\widehat{C} W R'XX')_{\sigma}-S(\widehat{C} W)_{\sigma} \nonumber \\
  &= I(A:R'XX'|B)_{\omega} \nonumber\\
  &\phantom{==}
     -S(AB)_{\omega}+S(C)_{\omega}+S(\widehat{C} W R'XX')_{\sigma}-S(\widehat{C} W)_{\sigma}, 
\end{align}
where the first line follows by the definition of the conditional mutual information. 
The second line follows because the state  $\ket{\sigma}^{Z\widehat{C} B W R'XX'}$ is pure;
this simply implies that for example $S(ZB)_{\sigma}=S(\widehat{C} W R'XX')_{\sigma}$.
The fourth line follows by the definition of the quantum conditional entropy.
The last line follows since the state $\ket{\omega}^{ACBR'XX'}$ is pure. Also, notice that
$I(A:R'XX'|B)_{\omega}=I(A:R|B)_{\psi}$. 
We now focus on the last four terms in the last line of Eq.~(\ref{eq:1}), 
and rewrite their sum as

\begin{align}
 -S&(AB)_{\omega}+S(C)_{\omega}+S(\widehat{C} W R'XX')_{\sigma}-S(\widehat{C} W)_{\sigma} \nonumber\\
   &\phantom{=}
    = -S(AB)_{\omega}+S(C)_{\omega}+S(\widehat{A}\widehat{B} V)_{\tau}-S(\widehat{C} W)_{\sigma}. \label{eq:a1}
\end{align}
This holds because the state $\ket{\tau}^{\widehat{A} \widehat{C} \widehat{B}  WV R'XX'}$ is pure.
Now, we lower-bound the latter expression in Eq.~(\ref{eq:a1}), as follows: 
\begin{align}
  -S&(AB)_{\omega}+S(C)_{\omega}+S(\widehat{A}\widehat{B} V)_{\tau}-S(\widehat{C} W)_{\sigma} \nonumber\\
    &\geq -S(AB)_{\tau}+S(\widehat{C})_{\sigma}+S(\widehat{A}\widehat{B} V)_{\tau}-S(\widehat{C} W)_{\sigma} \nonumber\\
    &\phantom{============}
     -2\sqrt{2D} \log |C| - h\left(\sqrt{2D}\right) \label{eq:a2} \\
    &= -S(AB)_{\tau}+S(\widehat{A}\widehat{B} V)_{\tau}-S(W|\widehat{C})_{\sigma} \nonumber\\
    &\phantom{============}
     -2\sqrt{2D} \log |C| - h\left(\sqrt{2D}\right) \label{eq:a3} \\
    &\geq -S(\widehat{A}\widehat{B})_{\tau}+S(\widehat{A}\widehat{B} V)_{\tau}-S(W|\widehat{C} )_{\sigma} \nonumber\\
    &\phantom{=======}
     -2\sqrt{2D} \log |A| |C| |B| - 2h\left(\sqrt{2D}\right) \label{eq:a4}\\
    &= S(V|\widehat{A}\widehat{B} )_{\tau}-S(W|\widehat{C} )_{\sigma} \nonumber\\
    &\phantom{=======}
     -2\sqrt{2D} \log |A| |C| |B| - 2h\left(\sqrt{2D}\right) \nonumber\\
    &\geq S(V|\widehat{A}\widehat{B} X)_{\tau}-S(W|\widehat{C} )_{\sigma} \nonumber\\
    &\phantom{=======}
     -2\sqrt{2D} \log |A| |C| |B| - 2h\left(\sqrt{2D}\right) \label{eq:a5} \\
    &= S(\widehat{A}\widehat{B}V X )_{\tau}-S(\widehat{A}\widehat{B} X )_{\tau}-S(W|\widehat{C} )_{\sigma} \nonumber\\
    &\phantom{=======}
     -2\sqrt{2D} \log |A| |C| |B| - 2h\left(\sqrt{2D}\right) \nonumber \\
    &\geq S(\widehat{A}\widehat{B}V X )_{\tau}-S(AB X )_{\omega}-S(W|\widehat{C} )_{\sigma} \nonumber\\
    &\phantom{====}
     -2\sqrt{2D} \log |A|^2 |C| |B|^2 |X| -3h\left(\sqrt{2D}\right) \label{eq:a6} \\
    &= S(\widehat{A}\widehat{B}V X )_{\tau}-S(CR'X )_{\omega}-S(W|\widehat{C} )_{\sigma} \nonumber\\
    &\phantom{====}
     -2\sqrt{2D} \log |A|^2 |C| |B|^2 |X| - 3h\left(\sqrt{2D}\right) \label{eq:a7} \\
    &= S(\widehat{A}\widehat{B}V X )_{\tau}-S(\widehat{C}R'X )_{\sigma}-S(W|\widehat{C} )_{\sigma} \nonumber\\
    &\phantom{===}
     -2\sqrt{2D} \log |A|^2 |C|^2 |B|^2 |X|^2 |R'| - 4h\left(\sqrt{2D}\right) \label{eq:a8} \\
    &= S(\widehat{C}W R'X )_{\sigma}-S(\widehat{C}R'X )_{\sigma}-S(W|\widehat{C} )_{\sigma} \nonumber\\
    &\phantom{===}
     -2\sqrt{2D} \log |A|^2 |C|^2 |B|^2 |X|^2 |R'| - 4h\left(\sqrt{2D}\right) \label{eq:a9} \\
    &= -I(W:R'X|\widehat{C})_{\sigma} \nonumber\\
    &\phantom{===}
     -2\sqrt{2D} \log |A|^2 |C|^2 |B|^2 |X|^2 |R'| - 4h\left(\sqrt{2D}\right) \nonumber \\
    &= -I(W:X|\widehat{C})_{\sigma}-I(W:R'|\widehat{C}X)_{\sigma} \nonumber\\
    &\phantom{===}
     -2\sqrt{2D} \log |A|^2 |C|^2 |B|^2 |X|^2 |R'| - 4h\left(\sqrt{2D}\right) \label{eq:a10} \\
    &\geq -2K(D)-I(W:R'|\widehat{C}X)_{\sigma} \nonumber\\
    &\phantom{===}
     -2\sqrt{2D} \log |A|^2 |C|^2 |B|^2 |X|^2 |R'| - 4h\left(\sqrt{2D}\right) \label{eq:a11},
\end{align}
where Eq.~(\ref{eq:a2}) follows from the fidelity criterion in Definition~\ref{def:K_epsilon}: the 
output state on the system $\widehat{C}$ is $2\sqrt{2D}$-close 
to the original state $C$ in trace norm; then the inequality follows 
by applying the Fannes-Audenaert inequality 
\cite{Fannes1973,Audenaert2007}.
Eq.~(\ref{eq:a3}) follows due to definition of the quantum conditional entropy.
Eq.~(\ref{eq:a4}) follows from the fidelity criterion in Definition~\ref{def:K_epsilon}: the 
output state on the system $\widehat{A}\widehat{B}$ is $2\sqrt{2D}$-close 
to the original state $AB$ in trace norm; then the inequality follows 
by applying the Fannes-Audenaert inequality.
Eq.~(\ref{eq:a5}) follows from the subadditivity of the entropy.
Eq.~(\ref{eq:a6}) follows from the fidelity criterion in Definition~\ref{def:K_epsilon}: the 
output state on the system $\widehat{A}\widehat{B} X$ is $2\sqrt{2D}$-close 
to the original state $ABX$ in trace norm; then the inequality follows 
by applying the Fannes-Audenaert inequality.
Eq.~(\ref{eq:a7}) follows because the state $\ket{\omega}^{ACB R'XX'}$ is pure.
Eq.~(\ref{eq:a8}) follows from the fidelity criterion in Definition~\ref{def:K_epsilon}: the 
output state on the system $\widehat{C}R'X$ is $2\sqrt{2D}$-close 
to the original state $C R' X$ in trace norm; then the inequality follows 
by applying the Fannes-Audenaert inequality.
Eq.~(\ref{eq:a9}) follows because the state $\ket{\tau}^{\widehat{A} \widehat{C} \widehat{B}  WV R'XX'}$ is pure.
Eq.~(\ref{eq:a10}) follows from the chain rule.
Eq.~(\ref{eq:a11}) follows from the definition of $K(D)$. 

From Eqs.~(\ref{eq:1}) and (\ref{eq:a11}) we now obtain
\begin{align}
 \frac12 I(Z:R|B)_{\varphi} 
    &= \frac12 I(Z:R'XX'|B)_{\sigma} \nonumber\\
    &\!\!\!\!\!\!\!\!\!\!\!\!\!\!\!
     \geq \frac12 I(A:R'XX'|B)_{\omega} -K(D)-\frac12I(W:R'|\widehat{C}X)_{\sigma} \nonumber\\
    &
          -\sqrt{\frac{D}{2}} \log |A|^2 |C|^2 |B|^2 |X|^2 |R'| - 2h\left(\sqrt{2D}\right). \nonumber
\end{align}
In Section B of the Appendix we prove the \emph{the decoupling condition}:
$\lim_{D \to 0} I(W:R'|\widehat{C}X)_{\sigma}= \lim_{D \to 0} I(W:R'|X)_{\sigma}=0$. 
Therefore, in the limit of $D \to 0$ the inequality of the lemma follows
\begin{align}
  \frac12 I(Z:R|B)_{\varphi} 
      &\geq \frac12 I(A:R'XX'|B)_{\omega} - \overline{K}(0). \nonumber\\
      &=    \frac12 I(A:R|B)_{\psi} - \overline{K}(0). \nonumber
\end{align}
\medskip

Finally, we prove the last statement of the lemma, i.e.~for trivial systems $C$ and $\widehat{C}$, 
$\frac12 I(Z:R|B)_{\varphi} 
 = \frac12 I(A:R|B)_{\psi} -\overline{K}(0)$. 
From Eq.~(\ref{eq:1}), we have
\begin{align}
  \label{eq:c trivial 1}
  I(Z:R'XX'|B)_{\sigma}
    &=I(A:R'XX'|B)_{\omega}-S(AB)_{\omega} \nonumber\\
    &\phantom{===}
     +S( W R'XX')_{\sigma}-S( W)_{\sigma}\nonumber \\
    &\!\!\!\!\!\!
     =    I(A:R'XX'|B)_{\omega}-S(R'XX')_{\omega} \nonumber\\
    &\phantom{===}
     +S( W R'XX')_{\sigma}-S( W)_{\sigma}\nonumber \\
    &\!\!\!\!\!\!
     =    I(A:R'XX'|B)_{\omega}-I( W : R'XX')_{\sigma}\nonumber \\
    &\!\!\!\!\!\!
     \leq I(A:R'XX'|B)_{\omega}-I( W : R'X)_{\sigma},
\end{align}
where the second line follows because for trivial $C$ the state on systems $ABR'XX'$ is pure.
The last line follows from data processing inequality. Also, from Eq.~(\ref{eq:a10}) we have
\begin{align} 
  \label{eq:c trivial 2}
  I(Z:R'XX'|B)_{\sigma}
      &\geq I(A:R'XX'|B)_{\omega} -I(W:R'X)_{\sigma} \nonumber\\ 
      &\!\!\!\!\!\!\!\!\!
       -\sqrt{2D} \log |A|^2 |C|^2 |B|^2 |X|^2 |R'|  - 4h\left(\sqrt{2D}\right).
\end{align}
The decoupling condition (see Section B of the Appendix), with Eqs.~(\ref{eq:c trivial 1}) and 
(\ref{eq:c trivial 2}) imply that in the limit of $D \to 0$,
\begin{align} 
     \frac12 I(Z:R|B)_{\varphi} &= \frac12 I(Z:R'XX'|B)_{\sigma} \nonumber \\
     &= \frac12 I(A:R'XX'|B)_{\omega} -\frac12 I(W:X)_{\sigma} \nonumber \\
     &=\frac12  I(A:R|B)_{\psi} -\frac12 I(W:X)_{\sigma}. \nonumber
\end{align}
Notice that since the term $I(A:R|B)_{\psi}$ is constant, taking the infimum of $\frac12 I(Z:R|B)_{\varphi}$ is equivalent to taking the supremum of $I(W:X)_{\sigma}$, therefore, the lemma follows.

\section*{B. Decoupling condition} 
Here we prove that the conditional mutual information of 
Eq.~(\ref{eq:a11}) vanishes in the limit of $D\rightarrow 0$:
$$\lim_{D \to 0} I(W:R'|\widehat{C}X)_{\sigma}= \lim_{D \to 0} I(W:R'|X)_{\sigma}=0.$$

Consider the following reduced states of the states defined in Eq.~(\ref{sigma tau pure states}):

\begin{align}
\sigma^{Z\widehat{C} B W R'X}
  &= \sum_x p(x) \proj{\sigma_x}^{Z\widehat{C} B W R'} \ox \proj{x}^X, \nonumber \\
\tau^{\widehat{A} \widehat{C} \widehat{B} WV R'X}
  &= \sum_x p(x) \proj{\tau_x}^{\widehat{A} \widehat{C} \widehat{B} WV R'} \ox \proj{x}^X. \nonumber 
\end{align}
The fidelity criterion of Definition~\ref{def:K_epsilon} implies the following,
\begin{align} 
  \label{eq:fidelity psi tau}
    1-D & \leq \sum_x p(x) F\left( \psi_x^{ACBR'},\tau_{x}^{\hat{A} \hat{C} \hat{B} R'}\right)^2 \nonumber \\
    &=\sum_x p(x)\bra{\psi_x} \tau_{x}^{\hat{A} \hat{C} \hat{B} R'} \ket{\psi_x}     \nonumber \\
    & \leq \sum_x p(x) \left\| \tau_{x}^{\hat{A} \hat{C} \hat{B} R'} \right\|, 
\end{align}
where $\|\cdot\|$ denotes the operator norm, i.e.~the maximum singular value. 
Now, consider the Schmidt decomposition of the state 
$\ket{\tau_x}^{\widehat{A} \widehat{C} \widehat{B} WV R'}$ with respect to the partition
$\widehat{A} \widehat{C} \widehat{B}R' : WV$, i.e.
\begin{align*}
\ket{\tau_x}^{\widehat{A} \widehat{C} \widehat{B} WV R'}
    = \sum_{i} \sqrt{\lambda_{x}(i)}\ket{v_{x}(i)}^{\widehat{A} \widehat{C} \widehat{B} R'}  \ket{w_{x}(i)}^{WV}. 
\end{align*}
The fidelity of Eq.~(\ref{eq:fidelity psi tau}) implies that the subsystems of the partition 
are almost decoupled on average:  
\begin{align}
  \label{eq:almost-pure WV}
  \sum_x p(x) &F\left( \tau_x^{\hat{A} \hat{C} \hat{B}WVR'},\tau_x^{\hat{A} \hat{C} \hat{B}R'} \ox \tau_x^{WV} \right)^2                                                              \nonumber\\
    &= \sum_{x} p(x) \sum_i \lambda_{x}(i)^{3}                                                 \nonumber\\
    &\geq \sum_{x} p(x) \left\| \tau_x^{\hat{A} \hat{C} \hat{B}R'} \right\|^{3}                        \nonumber\\
    &\geq \left( \sum_{x} p(x) \left\| \tau_x^{\hat{A} \hat{C} \hat{B}R'}\right\| \right)^{3} 
     \geq (1-D)^3 
     \geq 1 - 3D,  
\end{align}
where the penultimate line follows from the convexity of the function $x^3$. The last line is due to Eq.~(\ref{eq:fidelity psi tau}).
By the Alicki-Fannes inequality, this implies
\begin{align}
  I(WV:R'|\widehat{C}X)_\tau 
     &=    S(R'|\widehat{C}X)_\tau -S(R'|WV\widehat{C}X)_\tau \nonumber \\
     &\leq S(R'|\widehat{C}X)_\tau -S(R'|WV\widehat{C}X)_{\widehat{\tau}} \nonumber\\
     &\phantom{===}
            +4\sqrt{6D} \log |R'| + g\left(2\sqrt{6D}\right) \nonumber \\
     &=    4\sqrt{6D} \log |R'| + g\left(2\sqrt{6D}\right), \nonumber  
\end{align}
where the second line follows because $\widehat{\tau}$ is $2\sqrt{6D}$-close to state $\tau$ in trace norm where $\widehat{\tau}^{\hat{A} \hat{C} \hat{B}R' VW X}:=\sum_x p(x)\tau_x^{\hat{A} \hat{C} \hat{B}R'} \ox \tau_x^{WV} \ox \proj{x}^{X}$. The last line follows because $S(R'|WV\widehat{C}X)_{\widehat{\tau}}=S(R'|\widehat{C}X)_{\tau}$. Then, the decoupling follows
in the limit of $D \to 0$ since by data processing $I(W:R'|\widehat{C}X)_\tau \leq I(WV:R'|\widehat{C}X)_\tau$. We can similarly prove that $\lim_{D \to 0} I(W:R'|X)_\tau=0$.

\section*{C. Proof of Proposition~\ref{prop:generic sources}}
We denote the Stinespring isometries of CPTP maps $\cE_0$ and $\cD_0$ from 
Definition~\ref{def:K_epsilon} respectively by 
$U_{\cE_0}: AC \hookrightarrow Z \widehat{C} W$ 
and $U_{\cD_0}: ZB \hookrightarrow  \widehat{A} \widehat{B} V$.
For generic sources, we show that the environment systems $W$ and $V$ satisfy
 $\lim_{D \to 0} I(WV:X|\widehat{C})_\tau =0$.
Thus, we obtain
\begin{align}
    \overline{K}(0) = \lim_{D \to 0} I(W:X|\widehat{C})_\tau \leq \lim_{D \to 0} I(WV:X|\widehat{C})_\tau= 0. \nonumber
\end{align}
First, we note that the fidelity in Eq.~(\ref{eq:almost-pure WV}) implies the following 
\begin{align}
    \sum_x p(x) \left\| \tau_x^{\hat{A} \hat{C} \hat{B}WVR'}-\tau_x^{\hat{A} \hat{C} \hat{B}R'} \ox \tau_x^{WV} \right\|_1 \leq 2\sqrt{6D}. \nonumber
\end{align}
We also obtain the following bound by the definition of the state $\tau$ (Definition~\ref{def:K_epsilon}): 
\begin{align}
    \sum_x p(x) \left\| \tau_x^{\widehat{A} \widehat{C} \widehat{B}R'} \ox \tau_x^{WV} -\psi_x^{ACBR'}\ox \tau_x^{WV} \right\|_1 \leq 2\sqrt{2D}. \nonumber
\end{align}
By applying the triangle inequality to the above equations, we obtain 
\begin{align}
  \label{eq: tau omega trace distance}
  \sum_x p(x) \left\| \tau_x^{\widehat{A} \widehat{C} \widehat{B}WVR'}-\psi_x^{ACBR'} \ox \tau_x^{WV} \right\|_1 
    &\leq 2\left(\sqrt{6D}+\sqrt{2D}\right) \nonumber\\
    &\leq 8\sqrt{D}.
\end{align}
Since the source is generic, there is an $x$, say $x=0$, 
for which $\psi_0^{ACB}$ has full support on
$\mathcal{L}(\mathcal{H}_{ACB})$, i.e.~$\lambda_0:=\lambda_{\min}(\psi_0^{ACB})>0$. 
Therefore, for any $\ket{\psi_x}^{ACBR'}$ there is an operator $T_x$ acting on the reference system $R'$ such that
\begin{align*}
  \ket{\psi_x}^{ACBR'} = (\1_{ACB} \otimes T_x) \ket{\psi_0}^{ACBR'},  \nonumber
\end{align*}
and $\| T_x\| \leq \frac{1}{\sqrt{\lambda_0}}$ \cite{ZK_cqSW_2018}, 
where again $\|\cdot\|$ denotes the operator norm.
We can also rewrite the output state as follows:
\begin{align}
  \tau_x^{\widehat{A} \widehat{C} \widehat{B}WVR'}
    &= (U_{\cD_0}U_{\cE_0} \ox \1_{R'}) \psi_x^{ACBR'}(U_{\cD_0}U_{\cE_0} \ox \1_{R'})^{\dagger} \nonumber\\
    &= (U_{\cD_0}U_{\cE_0} \ox \1_{R'})  (\1_{ACB} \ox T_x) \psi_0^{ACBR'} \nonumber\\
    &\phantom{=============}
       (\1_{ACB} \ox T_x)^{\dagger} (U_{\cD_0}U_{\cE_0} \ox \1_{R'})^{\dagger} \nonumber \\
    &=  (\1_{\widehat{A} \widehat{C} \widehat{B}WV} \ox T_x)(U_{\cD_0}U_{\cE_0} \ox \1_{R'}) \psi_0^{ACBR'}  \nonumber\\
    &\phantom{===========}
       (U_{\cD_0}U_{\cE_0} \ox \1_{R'})^{\dagger}(\1_{\widehat{A} \widehat{C} \widehat{B}WV} \ox T_x)^{\dagger} \nonumber\\
    &=  (\1_{\widehat{A} \widehat{C} \widehat{B}WV} \ox T_x) \tau_0^{\widehat{A} \widehat{C} \widehat{B}WVR'} (\1_{\widehat{A} \widehat{C} \widehat{B}WV} \ox T_x)^{\dagger}. \nonumber
\end{align}
We now replace $\psi_x$ and $\tau_x$ 
with the above states to obtain the following:
\begin{align}
  \sum_x &p(x) \left\| \tau_x^{\widehat{A} \widehat{C} \widehat{B}WVR'}-\psi_x^{ACBR'} \ox \tau_0^{WV} \right\|_1 \nonumber \\
    &= \sum_x p(x) \left\| (\1 \ox T_x) \tau_0^{\widehat{A} \widehat{C} \widehat{B}WVR'} (\1 \ox T_x)^{\dagger} \right.\nonumber\\
    &\phantom{=========}
                   \left. -(\1 \ox T_x) \psi_0^{ACBR'} (\1 \ox T_x)^{\dagger} \ox \tau_0^{WV} \right\|_1\nonumber \\
    &= \sum_x p(x) \|T_x\|^2 \, \left\| \tau_0^{\widehat{A} \widehat{C} \widehat{B}WVR'} - \psi_0^{ACBR'} \ox \tau_0^{WV} \right\|_1\nonumber \\
    &\leq 2\left(\sqrt{6D}+\sqrt{2D}\right) \sum_x p(x) \|T_x\|^2 
     \leq \frac{8\sqrt{D}}{{\lambda_0}}, \nonumber
\end{align}
where the third line follows from Eq.~(\ref{eq: tau omega trace distance}).  
We use the above upper bound on the average distance between the reduced states $\tau_x^{\widehat{C} WV}$ and $\psi_x^{C} \ox \tau_0^{WV}$ to conclude that the environment\\

\noindent
systems $WV$ are decoupled from systems $\widehat{C} X$:
\begin{align}
   \frac{1}{2} &\Bigl\| \underbrace{ \sum_x  p(x)\tau_x^{\widehat{C}WV}\ox \proj{x}^X}_{\tau^{\widehat{C}WV X}}-\underbrace{\sum_x p(x)\omega_x^C \ox \tau_0^{WV} \ox \proj{x}^X}_{\zeta^{C WV X}} \Bigr\|_1 \nonumber\\
    &\phantom{=======}
     \leq \frac{1}{2} \sum_x p(x) \left\| \tau_x^{\widehat{C}WV}- \omega^{C}_x \ox \tau_0^{WV} \right\|_1 
     \leq \frac{8\sqrt{D}}{{\lambda_0}}=:\delta_D. \nonumber
\end{align}
By applying the Alicki-Fannes inequality in the form of Eq.~(\ref{thm:AFW}) to 
the above states, we obtain
\begin{align*}
  I&(WV:\widehat{C}X)_\tau = S(\widehat{C}X)_{\tau}-S(CX|WV)_{\zeta}+S(CX|WV)_{\zeta} \\
   &\phantom{===============================} -S(\widehat{C}X|WV)_{\tau} \\
   &=    S(\widehat{C}X)_{\tau}-S(CX)_{\omega} +S(CX|WV)_{\zeta}-S(\widehat{C}X|WV)_{\tau} \\
   &\leq S(\widehat{C}X|WV)_{\zeta}-S(\widehat{C}X|WV)_{\tau} \\
   &\phantom{====}
          +2\sqrt{2D} \log |C| |X| + h\left(\sqrt{2D}\right)  \\
   &\leq  2\delta_D \log |C| |X| + h(\delta_D) +2\sqrt{2D} \log |C| |X| + h\left(\sqrt{2D}\right), 
\end{align*}
where the third line follows from the fidelity criterion in Definition~\ref{def:K_epsilon}: the
output state on the system $\widehat{C}X$ is $2\sqrt{2D}$-close 
to the original state $CX$ in trace norm; then the inequality follows 
by applying the Fannes-Audenaert inequality. 
Therefore, we conclude that $ I(W:X|\widehat{C})_\tau \leq  I(WV:X|\widehat{C})_\tau \leq I(WV:X\widehat{C})_\tau$ which the latter vanishes for $D \to 0$.

\bibliographystyle{IEEEtran}


\end{document}